\documentclass[letterpaper, 10 pt, conference]{ieeeconf}  % Comment this line out if you need a4paper

\IEEEoverridecommandlockouts                              % This command is only needed if 
\usepackage{graphics} % for pdf, bitmapped graphics files
\usepackage{epsfig} % for postscript graphics files
\usepackage{mathptmx} % assumes new font selection scheme installed
\usepackage{times} % assumes new font selection scheme installed
\usepackage{amsmath} % assumes amsmath package installed
\usepackage{amssymb}  % assumes amsmath package installed
\usepackage{url}

\usepackage{algorithm}
\usepackage[noend]{algpseudocode} % from algorithmicx
\usepackage{xcolor}
\usepackage{booktabs}
\usepackage{cleveref}
\usepackage{listings}
\usepackage{cite}
\usepackage{xspace}
\usepackage{tcolorbox}
\tcbuselibrary{listings, breakable, skins}

\newtcblisting{codebox}[2][]{%
  listing engine=listings,
  listing only,
  breakable,
  enhanced,
  colback=black!1,
  colframe=black!20,
  boxrule=0.5pt,
  arc=2pt,
  left=6pt,right=6pt,top=0pt,bottom=0pt,
  listing options={
    language=#2,
    basicstyle=\ttfamily\footnotesize,
    numbers=left,
    numberstyle=\tiny,
    numbersep=5pt,
    breaklines=true,
    breakatwhitespace=false,
    columns=fullflexible,
    keepspaces=true,
    showstringspaces=false,
    tabsize=2
  },
  #1
}

\definecolor{mygray}{gray}{0.96}
\definecolor{darkgray}{gray}{0.30}

\newtcolorbox{theorybox}[1]{
  breakable,
  enhanced,
  colback=mygray,
  colframe=darkgray,
  coltitle=white,
  colbacktitle=darkgray,
  boxrule=0.9pt,
  sharp corners,
  title={#1},
  fonttitle=\bfseries,
  left=4pt,
  right=4pt,
  top=4pt,
  bottom=4pt,
  boxsep=2pt,
  toptitle=2pt,
  bottomtitle=2pt,
  before upper={
    \setlength{\abovedisplayskip}{4pt}
    \setlength{\belowdisplayskip}{4pt}
    \setlength{\abovedisplayshortskip}{2pt}
    \setlength{\belowdisplayshortskip}{2pt}
  }
}
\newcommand{\abCROWN}{$\alpha,\!\beta$-CROWN\xspace}

\newtheorem{theorem}{Theorem}

\newtheorem{definition}{Definition}

\newtheorem{example}{Example}

\title{\LARGE \bf
Bridging Control with Neural Network Verifier \texttt{alpha-beta-CROWN}: A Tutorial
}

\author{Haoyu Li$^*$, Xiangru Zhong$^*$, Hao Cheng, Bin Hu, Huan Zhang% <-this % stops a space
\thanks{* Equal Contribution}% <-this % stops a space
\thanks{Haoyu Li is with the Department of Computer Science, and Xiangru Zhong, Hao Cheng, Bin Hu, Huan Zhang are with the Department of Electrical and Computer Engineering, University of Illinois Urbana-Champaign, Urbana, IL 61801, USA
{\tt\small haoyuli5@illinois.edu, xiangru4@illinois.edu, haoc539@illinois.edu, binhu7@illinois.edu, huan@huan-zhang.com}}%
}

\begin{document}

\maketitle
\thispagestyle{empty}
\pagestyle{empty}

%%%%%%%%%%%%%%%%%%%%%%%%%%%%%%%%%%%%%%%%%%%%%%%%%%%%%%%%%%%%%%%%%%%%%%%%%%%%%%%%
\begin{abstract}

Learning-based methods for synthesizing controllers have gained popularity due to their high expressiveness and strong empirical performance. However, in safety-critical scenarios such as autonomous driving, robotics, and power systems, empirical performance alone is insufficient, and formal verification of controller properties such as stability and safety is highly desirable. Unfortunately, many prior verification approaches are either tied to specific structural assumptions on the system or the certificate, making them difficult to transfer across settings, or suffer from poor scalability on higher-dimensional neural network systems. In this tutorial, we present a unified framework that aims to mitigate this gap via bridging control with the state-of-the-art neural network verifier \abCROWN (alpha-beta-CROWN). At its core, \abCROWN is a general-purpose bounding engine for nonlinear functions represented as computation graphs: given an input domain, it can automatically produce certified bounds and explicit linear relaxation of the nonlinear function. These certified bounds are useful on their own for tasks such as reachability analysis and local linear approximation, and they also provide the foundation for more complex routines that perform verification, satisfiability checking, and optimization. More specifically, many control problems reduce to verifying real-valued inequalities over a state domain (e.g., Lyapunov theory and barrier functions). Consequently, \abCROWN enables scalable verification of such conditions by computing tight bounds on these conditions and recursively partitioning and pruning subdomains based on the bounds. Thanks to GPU parallelization, this pipeline demonstrates superior scalability on verification and optimization problems that are challenging for traditional approaches. In this tutorial, we discuss the basics of \abCROWN, and introduce its application to various control-related tasks. Finally, we will discuss training frameworks for co-synthesizing controllers and corresponding verifier-friendly certificates.

% More specifically, it enables scalable satisfiability solvers by proving real constraints in the box, and it can be extended to an optimizer by recursively partitioning and pruning out domains using the bound information. 

% This verifier checks generic properties by efficiently constructing linear bounds on the objective over the domain. To tighten these bounds, it further employs branch-and-bound to split the domain into many smaller subdomains. Because linear bounds become tighter over smaller domains, the combination of branch-and-bound with an efficient, parallelizable bound-propagation algorithm enables scalable verification. Moreover, we introduce how to convert generic control verification problems into a form that $\alpha,\!\beta$-CROWN can handle. Since $\alpha,\!\beta$-CROWN supports general computation graphs, such modifications are typically easy to implement and are not restricted to a specific problem class. Finally, we discuss training frameworks for co-synthesizing controllers and corresponding certificates that are verifier-friendly.

\end{abstract}

%%%%%%%%%%%%%%%%%%%%%%%%%%%%%%%%%%%%%%%%%%%%%%%%%%%%%%%%%%%%%%%%%%%%%%%%%%%%%%%%
\section{Introduction}
% \textcolor{blue}{say we provide dReal compatible API}
Learning-based neural network (NN) control policies have become increasingly prevalent in modern control applications, including autonomous vehicles, robotic manipulations, and power system regulations~\cite{kaufmann2023champion,zhao2024applications,glavic2019deep}. Their expressive capacity enables the synthesis of control policies for general nonlinear plants \cite{bucsoniu2018reinforcement}. However, the deployment of such controllers in safety-critical settings requires guarantees that cannot be obtained through empirical testing alone. As a matter of fact, there is a growing demand for formal verification frameworks capable of certifying closed-loop properties such as stability and safety for learned NN controllers~\cite{tran2020verification,dawson2023safe}.

% \huan{Briefly discuss what is the nature of this verification problem? before mentioning SMT}

% Since closed-loop verification conditions can often be expressed as conjunctions/disjunctions of local nonlinear constraints, a natural approach is to cast verification as a nonlinear satisfiability problem and solve it with a general-purpose SMT verifier or nonlinear solvers.

A natural formulation is to cast verification of NN controllers as a nonlinear satisfiability problem and solve it with a general-purpose SMT verifier or other nonlinear programming solvers. In this vein, existing works formulate the closed-loop verification objective as a constraint system and invoke SMT solvers~\cite{chang2019neural,jagtap2020formal,liu2025formally,liu2025physics,meng2025learning,liu2024tool,wang2024actor,edwards2024fossil,liu2023towards} (e.g., dReal~\cite{gao2013dreal} and Z3~\cite{de2008z3}), MILP/MIP-based encodings for piecewise-linear (ReLU/LeakyReLU activations) networks
~\cite{tjeng2017evaluating,katz2017reluplex,katz2019marabou,karg2020stability,dai2021lyapunov,wu2023neural}, or convex optimization problems such as semidefinite programming (SDP) or sum-of-squares (SOS) programming~\cite{papachristodoulou2005tutorial,yin2021stability,dai2024verification,newton2022stability,detailleur2025improved,yin2021imitation,clark2024semialgebraic,wei2021control,tedrake2010lqr,zhao2023safety}. However, these methods typically scale poorly due to the limited ability to exploit GPU-acceleration and parallelism. In addition, methods such as SOS/MILP can only handle problems with specific structures, for example, relatively shallow piecewise-linear networks with ReLU-type activations, or polynomials~\cite{tjeng2017evaluating,katz2017reluplex,katz2019marabou}. Recent control-oriented verification methods also exploit certificate-specific structure, for example, through symbolic bound propagation~\cite{pmlr-v270-hu25a,zhang2023exact,vertovec2025scalable,hu2024real,li2025verifiable}. While effective for their target properties, these pipelines are less directly transferable across different certificate classes. 

In the machine learning community, NN verification has emerged as an important direction for scalable certification~\cite{wong2018provable,raghunathan2018certified,brown2022unified,gehr2018ai2,singh2018fast,zhang2018efficient,jafarpour2024efficient,jafarpour2023interval}. Rather than solving the nonlinear constraints exactly as done in SMT or MILP-based approaches, one can construct sound over-approximations of the network's behavior over a given input set~\cite{gowal2018effectiveness, zhang2018efficient}. Built upon this idea, $\alpha,\!\beta$-CROWN has become a state-of-the-art NN verifier that combines linear bound propagation with branch-and-bound~\cite{zhang2018efficient,salman2019convex,xu2020automatic,xu2021fast,wang2021beta,zhang2022general,kotha2023provably,zhou2024scalable,shi2024genbab,zhou2025clip} to achieve several key advantages. First, $\alpha,\!\beta$-CROWN directly works on general nonlinear functions represented by computation graphs, enabling the direct handling of constraints and objectives with complex NN components~\cite{yang2024lyapunov,shi2024certified,li2025neural,serry2025safe,li2025two}. Secondly, $\alpha,\!\beta$-CROWN exhibits much stronger empirical scalability. Notice that $\alpha,\!\beta$-CROWN propagates sound and efficient linear over-approximations of the computation graph in a very efficient manner, and uses branch-and-bound to iteratively partition and tighten those subdomains whose bounds remain inconclusive. The efficient bound propagation in $\alpha,\!\beta$-CROWN is accelerated by GPUs, and a large number of subdomains can be calculated in parallel, which makes the verification process very efficient. Finally, the linear bound propagation algorithm in $\alpha,\!\beta$-CROWN is fully differentiable, enabling more downstream applications such as training verification-friendly controllers~\cite{shi2024certified}. 

The properties of $\alpha,\!\beta$-CROWN enable several use modes for control problems. First, the verifier could return certified numeric bounds as well as local linear bounds of a complicated nonlinear function represented as a computation graph. These bounds can be directly used for reachability analysis~\cite{chen2013flow,althoff2021set,tran2020nnv,zhang2023reachability,zhang2024reachability,abate2024safe,shen2026diffreach} and for constructing local linear approximations of nonlinear functions~\cite{coogan2015efficient,coogan2017finite,coogan2020mixed,dutreix2020specification,sidrane2022overt}. Second, the same bounds can be used for solving general satisfiability problems over real arithmetic. This enables scalable certification of many control theoretic properties, such as Lyapunov stability~\cite{meng2024zubov,meng2025towards,liu2025formal}, Barrier function safety~\cite{ames2016control,robey2020learning,dai2023convex,dawson2022safe}, or contraction~\cite{tsukamoto2020neural,tsukamoto2020stochastic,tsukamoto2021theoretical,sun2021learning}. Finally, the verifier can be used as an optimizer for general nonlinear objectives, such as MPC style optimization with general nonlinear models~\cite{shen2024bab}. Building on these capabilities, this tutorial shows how to reformulate common control-theoretic problems amenable to $\alpha,\!\beta$-CROWN.  Additionally, we illustrate how to jointly synthesize verifiable controllers together with neural certificates with counterexample-guided Synthesis (CEGIS)~\cite{abate2018counterexample,yang2024lyapunov,chang2019neural,ravanbakhsh2015counterexample,edwards2024fossil,abate2020formal,masti2023counter,dai2020counter,ravanbakhsh2016robust} or certified training~\cite{wong2018provable,mirman2018differentiable,shi2024certified}. In summary, the goals of this tutorial are:
\begin{itemize}
    \item Provide the mathematical background of the neural network verifier $\alpha,\!\beta$-CROWN.
    
    \item Demonstrate how to formulate control problems, such as reachability, satisfiability  (Lyapunov and Barrier function verification), and optimization (MPC), in an $\alpha,\!\beta$-CROWN-friendly manner.
    \item Detail the programming interface and practical usage of \abCROWN.
    % \item Introduce the API and usage of $\alpha,\!\beta$-CROWN. Demonstrate how to rewrite the control problems into an $\alpha,\!\beta$-CROWN-friendly manner. We also provide a dReal-compatible API, enabling existing dReal-based workflows to be translated seamlessly to the more scalable $\alpha,\!\beta$-CROWN framework.
    \item Introduce several training paradigms that can yield verifiable neural certificates. 
\end{itemize}

\section{Motivating Examples}\label{sec:hook}

To make the capabilities of $\alpha,\!\beta$-CROWN concrete, in this section, we illustrate each mode of $\alpha,\!\beta$-CROWN with simple motivating examples. More detailed applications to control will be deferred to Section~\ref{sec:abcrown_usage} and \ref{sec:control_app}. All the examples provided in this section will consider a discrete-time system
\begin{equation}\label{eq:discrete_time_sys}
    x_{t+1} = f(x_t, u_t)
\end{equation}
where $x_t\in\mathbb{R}^n$ is the state, $u_t\in\mathbb{R}^m$ is the control input, and $f$ may be a general nonlinear dynamics model (including learned neural dynamics). When a state-feedback controller $u_t=\pi(x_t)$ is applied, the resulting closed-loop system is
\begin{align}\label{eq:closed_loop_sys}
    x_{t+1} = f\!\bigl(x_t,\pi(x_t)\bigr) =: g(x_t),
\end{align}
where $g$ denotes the induced closed-loop map.\footnote{The complete code for all examples in this tutorial is available at \url{https://github.com/Verified-Intelligence/abCROWN_Control_Tutorial}.}
% \xiangru{Do we also need to put the link to \abCROWN somewhere in the paper?}
% \huan{Mention that full code is provided at [link in footnote]}

% In general, $\alpha,\!$

\subsection{$\alpha,\!\beta$-CROWN for Computing Certified Bounds}

% \textcolor{blue}{Be clear, we compute the bound, not verify. Give intuition.}
We first illustrate a simple use case of the bounds $\alpha,\!\beta$-CROWN produced. Given an input domain $\mathcal{B}$ for which the initial state belongs, it is oftentimes of interest to compute an overapproximation of the one-step reachable set, i.e., the set of all possible $x_{1} = g(x_0)$ for $x_0 \in \mathcal{B}$. Even this one-step problem is nontrivial when $g$ contains NN components. In this scenario, $\alpha,\!\beta$-CROWN serves to compute certified bounds of the function $g$ over $\mathcal{B}$, yielding a sound enclosure: 
\begin{theorybox}{Compute Bounds in $\alpha,\!\beta$-CROWN}
    \begin{equation}
    \text{Find } \underline{x}, \overline{x} \in \mathbb{R}^n \text{   such that   } g(\mathcal{B}) \subseteq [\underline{x}, \overline{x}],
    \end{equation}
\end{theorybox}
\noindent
where $[\underline{x}, \overline{x}] = \{x\in \mathbb{R}^n: \underline{x} \leq x \leq \bar{x}\}$ denotes an elementwise-box. Unrolling this process step by step allows scalable reachability analysis over very general function classes, including learning-based dynamics with neural networks.
\begin{codebox}{Python}
from abcrown import (
    ABCrownSolver,
    ConfigBuilder,
    IOConstraints,
    input_vars,
    output_vars,
)
# Note: These imports are needed for all the following examples.

# 1. Define and instantiate the computation graph
class ReachabilityGraph(nn.Module):
    def __init__(self):
        super().__init__()
        self.dt = 0.1
        # A residual neural network dynamics model: x_{t+1} = x_t + dt * NN(x_t)
        self.net = nn.Sequential(
            nn.Linear(1, 16),
            nn.ReLU(),
            nn.Linear(16, 16),
            nn.ReLU(),
            nn.Linear(16, 1)
        )

    def forward(self, x_t):
        # The computation graph natively supports standard PyTorch arithmetic
        return x_t + self.dt * self.net(x_t)

computation_graph = ReachabilityGraph()

# 2. Define graph input and output variables
x = input_vars(1)
y = output_vars(1)

# 3. Create solver instance with the computation graph
cfg = ConfigBuilder.from_defaults().set(
    "bab/timeout", 30)
solver = ABCrownSolver(computation_graph, x, y, config=cfg)

# 4. Create input constraints
# Bounding the initial state within a domain D
input_constraints = (x >= -1.0) & (x <= 1.0)
constraints = IOConstraints(
    input_vars=x,
    input_constraints=input_constraints
)

# 5. Compute bounds on y given constraints on x
result = solver.compute_bounds(
    constraints=constraints,
    objective=y
)

print("Lower bounds: ", result.lower)
print("Upper bounds: ", result.upper)
\end{codebox}
% \xiangru{Make the computation graph slightly more complicate (not just a ReLU network)}
% \huan{Here, explain what is obtained from abCROWN? What things are returned? Just reading the code is a bit unclear on what was going on.}

% Even when the plant dynamics are known and simple, the controller $\pi$ is often a neural network, and the certificate or specification we care about (e.g., a Lyapunov or barrier function) may also be represented by a neural network. This turns the desired guarantee into a very nonlinear constraint over a continuous region, and it must hold for all (infinitely many) states in the domain. 

\subsection{$\alpha,\!\beta$-CROWN for Solving Satisfiability Problems}

We then consider an example of using $\alpha,\!\beta$-CROWN for solving a satisfiability problem. Formally certifying properties of neural network-controlled dynamical systems, such as stability or safety guarantees, is non-trivial. For example, the standard strategy to prove the stability of~\eqref{eq:discrete_time_sys} is through Lyapunov functions~\cite{khalil2002nonlinear,lyapunov1992general}. Formally, suppose there is a domain $\mathcal{B}$ around the equilibrium $x^*$ such that there exists a function $V: \mathcal{B} \to \mathbb{R}$ such that 
\begin{enumerate}
    \item $V(x^*) = 0$ and $V(x) > 0$ otherwise,
    \item $V(g(x)) < V(x)$ for all $x \neq x^*$,
\end{enumerate}
then $x^*$ is asymptotically stable. Suppose that we have a candidate $V$ that satisfies $V(x^*) = 0$ and $V(x) > 0$ otherwise (e.g., by construction~\cite{yang2024lyapunov}), $\alpha,\!\beta$-CROWN enables efficient verification of the following satisfiability problem:
\begin{theorybox}{Verifying Satisfiability Problems in $\alpha,\!\beta$-CROWN}
\begin{align*}
    V(f(x, \pi(x))) - V(x) < \epsilon \quad \forall x \in \mathcal{B}.
\end{align*}
\end{theorybox}
\noindent
where $\epsilon > 0$ is a small tolerance used to handle the equilibrium. Here, the condition is simplified. In Section~\ref{sec:discrete_Lyapunov}, we provide a more detailed discussion on ROA verification in \abCROWN. As a high-level intuition, $\alpha,\!\beta$-CROWN converts the condition into a bound computation problem and compares the resulting bounds with $0$. Users define the verification specification using a Python-embedded DSL (Domain-Specific Language), and call $\alpha,\!\beta$-CROWN to verify it.

\begin{codebox}{python}
# 1. Define a control system and a Lyapunov function
dynamics = get_dynamics_model()    # f(x, u)
controller = get_controller()      # pi(x)
lyapunov = get_lyapunov_network()  # V(x)
    
# 2. Build the computation graph
# We assume this graph outputs a vector y:
# y[0] = V(f(x, pi(x))) - V(x) + kappa * V(x)
model = DiscreteLyapunovGraph(
    dynamics, controller, lyapunov, kappa=0.1)

# 3. Define graph input and output variables
x = input_vars(2)       # Input dimension is 2
y = output_vars(1)      # Output dimension is 1

# 4. Create solver instance with the graph
solver = ABCrownSolver(model, x, y)

# 5. Create verification constraints.
input_constraints = (
    (x >= [-2.0, -2.0]) &
    (x <= [2.0, 2.0])
)
output_constraints = (
    (y[0] < 1e-6) 
)

constraints = IOConstraints(
    input_vars=x,
    output_vars=y,
    input_constraints=input_constraints,
    output_constraints=output_constraints,
)

# 6. Run verification.
result = solver.verify(constraints=constraints)

print("Status:", result.status)
# "verified", "falsified", "unknown".
\end{codebox}
% If the result status is ``verified'', the Lyapunov condition is verified over the \emph{entire} input region. If it's ``falsified'', a counterexample is expected.
% \huan{What does abCROWN return? Need to say something like: if verify returns true then the Lyapunov condition is verified over the \emph{entire} input region, etc}
Although simple, this problem is a good general template for satisfiability problems: many control specifications reduce to verifying the sign of a scalar expression over a domain $\mathcal{B}$. In $\alpha,\!\beta$-CROWN, switching from Lyapunov to other properties usually amounts to changing only a few lines of PyTorch code, while the whole verification pipeline remains the same. While here we used a discrete-time system as an example, $\alpha,\!\beta$-CROWN can also handle continuous-time systems with Jacobians. 
% \huan{Like a hello-world example where we can introduce the idea of verification}
% \huan{Here we can use a discrete-time system. But mention continous time with Jacobian can also work, see Section XXX.}

\subsection{$\alpha,\!\beta$-CROWN for Non-linear Optimization}

Finally, we illustrate the use of $\alpha,\!\beta$-CROWN as an optimizer with an example of a finite horizon MPC problem. A common goal for planning problems in control is to find a sequence of optimal actions $u_t$ such that the sum of the step cost is minimized, i.e.
\begin{theorybox}{MPC optimization in $\alpha,\!\beta$-CROWN}
\begin{align}
    \min_{u_t \in \mathcal{U}} \sum_{t=t_0}^{t_0 + H} c(x_t, u_t), \quad \text{s.t. } x_{t+1} = f(x_{t}, u_{t})
\end{align}
\end{theorybox}
\noindent
with $c$ being the cost function. Since in many cases $f$ will be a learning-based dynamics, the optimization problem is highly non-convex and nonlinear, especially when the horizon is long. The key observation is that, although complicated, the total cost function $\sum_{t=t_0}^{t_0 + H} c(x_t, u_t)$ can be treated as a computation graph that maps $\mathcal{U}^H$ to $\mathbb{R}$. Therefore, $\alpha,\!\beta$-CROWN can efficiently and systematically handle this optimization problem. Compared to directly applying gradient descent, which is fast but typically provides only a local minimum, $\alpha,\!\beta$-CROWN is theoretically a global optimizer guided by certified lower bounds, which can be used as a signal for optimality gap. More details of the optimizer functionality are discussed in Section~\ref{sec:optimization}.

% \huan{People will say why not just optimize the objective using gradient decent? The good thing with abCROWN is that it can provide a lower bound on objective and an optimality guarantee, and is theoretically a global optimizer rather than local search.}

\begin{codebox}{Python}
# 1. Load the model
model = get_MPCGraph(horizon=H)

# 2. Define graph input and output variables
u = input_vars(H)
y = output_vars(1)

# 3. Build the solver
solver = ABCrownSolver(model, u, y)

# 4. Create constraints
input_constraints = (u >= -5.0) & (u <= 5.0)
constraints = IOConstraints(
    input_vars=u,
    input_constraints=input_constraints,
)

# 5. Minimize the total cost
result = solver.minimize(
    objective=y[0],
    constraints=constraints)

print(f"Minimum cost found: {
    result.primal_value}")
print(f"Optimal action sequence: {
    result.x_best}")
\end{codebox}

\section{The $\alpha,\!\beta$-CROWN Verifier}
Formally, \abCROWN aims to verify 
\begin{align}\label{verify:condition}
    F(x) > 0 \text{ for all } x\in \mathcal{B}
\end{align}
with $F$ composed by supported nonlinear functions and $\mathcal{B}$ being a hyper-rectangle with the same dimension as $x$. $\alpha,\!\beta$-CROWN couples fast \emph{symbolic linear bound propagation} with
\emph{branch-and-bound}. For a hyper-rectangle $\mathcal{B}$, it constructs an affine lower bound of $F$ and provides certification based on the minimum value of this affine bound. If inconclusive, it \emph{splits} $\mathcal{B}$ into sub-domains and repeats. As the domains shrink, local linear relaxations tighten, and the overapproximation error from the nonlinearities decreases. The full process is highly efficient, as all necessary computations, including local relaxations and bound propagations, are heavily vectorized and can be executed in parallel. Tens of thousands of sub-domains can be processed simultaneously on a GPU, enabling rapid pruning of large regions and certification with only coarse bounds on tiny domains.

% \medskip
% \noindent Our tutorial will go through the following aspects of $\alpha,\!\beta$-CROWN: 
% \begin{enumerate}
%     \item What functions $F$ are supported for verification?
%     \item How is the bound computed on each subdomain?
%     \item How is the split (branch-and-bound) being done?
%     \item Further improvement to tighten the bounds.
%     \item How to handle more complex logical formula for verification?
%     \item How to compute bounds for the Jacobian of a function?
%     \item Overall pipeline of $\alpha,\!\beta$-CROWN.
% \end{enumerate}

\subsection{Admissible Function Class}
In control applications, as seen in~\Cref{sec:hook}, the objective we wish to verify is rarely just a standalone neural network. More often, it is an expression that comprises several pieces: a policy $\pi(x)$, system dynamics $f(x,u)$, and some neural certificates $V(x)$. However, all these functions belong to a more generic function class named \textbf{computation graphs}. Intuitively, a computation graph is a directed acyclic graph (DAG) that represents a complicated function as a sequence of simple steps. Each node in the computation graph is an elementary operation (like addition, multiplication, or ReLU), and each edge passes the results of one step to the next. More formally, a computation graph can be defined as follows.

\vspace{0.05in}

\begin{definition}[Computation Graph]
A computation graph is a finite directed acyclic graph (DAG) $G=(V,E)$ whose
nodes $V=\{1,\dots,n\}$ are partitioned into input nodes $V_{\mathrm{in}}$,
parameter/constant nodes $V_{\mathrm{par}}$, and operator nodes $V_{\mathrm{op}}$.
Each node $i$ has an associated output vector $h_i\in\mathbb{R}^{d_i}$. For
each node $i$, let $u(i)=(u_1(i),\ldots,u_{m(i)}(i))$ be the ordered list of
its parents (so $(u_j(i),i)\in E$ and $m(i)=|u(i)|$).

\emph{Inputs.} For every input node $i\in V_{\mathrm{in}}$ we associate an
external input variable $x_i\in\mathbb{R}^{d_i}$, and the node simply
passes it through: $h_i=x_i$. If admissible sets $S_i\subseteq\mathbb{R}^{d_i}$
are specified for the inputs, the input space is
$S:=\prod_{i\in V_{\mathrm{in}}} S_i$, and we write
$x=(x_i)_{i\in V_{\mathrm{in}}}\in S$.

\emph{Parameters.} For every parameter/constant node $i\in V_{\mathrm{par}}$ a
fixed vector $c_i\in\mathbb{R}^{d_i}$ is given, and
$h_i=c_i$.

\emph{Operators.} For every operator node $i\in V_{\mathrm{op}}$ a primitive
map
\[
H_i:\ \prod_{k=1}^{m(i)}\mathbb{R}^{d_{u_k(i)}}\longrightarrow \mathbb{R}^{d_i}
\]
is specified (affine map, elementwise nonlinearity, concatenation, reduction,
etc.), and its output is
\[
h_i \;=\; H_i\!\big(h_{u_1(i)},\ldots,h_{u_{m(i)}(i)}\big).
\]

\emph{Outputs.} 
Let $O\subseteq V$ denote the set of output nodes (i.e., nodes with an out-degree of 0). The graph then
induces the function
\[
F_G:\ \prod_{i\in V_{\mathrm{in}}}\mathbb{R}^{d_i}\longrightarrow
\prod_{o\in O}\mathbb{R}^{d_o},\qquad F_G(x)=(h_o)_{o\in O}.
\]
\end{definition}

\vspace{0.05in}

\begin{example}[Toy Computation Graph]
    Suppose we want to represent the function
    \begin{equation}\label{eq:toygraph}
        f(x) = \mathrm{ReLU}(2x - 1) + 3.
    \end{equation}
    This can be written as a computation graph with four operator nodes: (1) multiply input by 2, (2) subtract 1, (3) apply ReLU, then (4) add 3. Each intermediate value is stored at a node, and the edges show how values flow from one step to the next. As shown in \Cref{fig:toygraph}, this small example illustrates the general idea: even a simple function can be broken down into basic operations, which is exactly how $\alpha,\!\beta$-CROWN reasons about much larger networks.
    \begin{figure}[h]
        \centering
        \includegraphics[width=\linewidth]{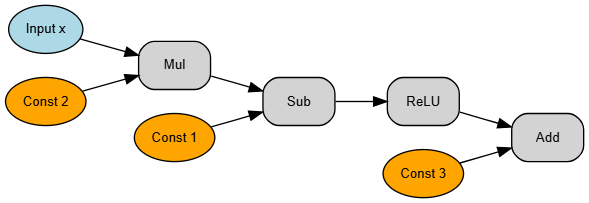}
        \caption{Computation graph of \Cref{eq:toygraph}.}
        \label{fig:toygraph}
    \end{figure}
\end{example}

The abstraction above captures many models and specifications that are commonly used in learning and control. For example, it is clear that any standard feed-forward network or more complicated neural network structures, such as residual networks or transformers, can be specified in this form with an appropriate choice of nodes, edges, and operators. Furthermore, many interesting properties of control systems, such as the Lyapunov condition, can be specified in this form. For example, for a discrete time system $x_{t+1} = f(x_t, \pi(x_t))$, the Lyapunov condition
\begin{align}
    F(x) = V(x) - V(f(x, \pi(x)))
\end{align}
is also a computation graph even when $\pi(x)$ and $V(x)$ are parametrized by neural networks (See~\Cref{fig:lyapunov_graph}).
\begin{figure}[h]
    \centering
    \includegraphics[width=\linewidth]{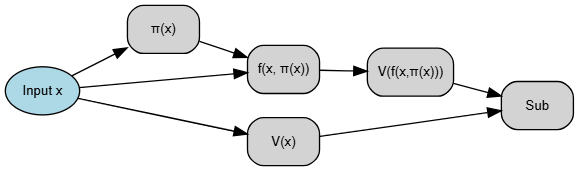}
    \caption{Computation graph of the Lyapunov condition. For illustration, $f$, $\pi$, and $V$ are abstracted as single nodes.}
    \label{fig:lyapunov_graph}
\end{figure}

\subsection{\texttt{auto\_LiRPA}: Automatic Linear Relaxation based Perturbation Analysis}\label{subsec:auto_lirpa}

% \huan{connect this to the example above.}

% \textcolor{blue}{Show high level intuition. Plot some linear bound figure (non-relu, make sure people understand we can handle others). One big computation graph with linear bound propagate.}

Suppose $F: \mathbb{R}^n \to \mathbb{R}$ is a computation graph (e.g., Fig.~\ref{fig:lyapunov_graph}) and verification domain is an $L_{\infty}$ box $\mathcal{B} = \prod_{i=1}^n [\ell_i, u_i]$. $\alpha,\!\beta$-CROWN efficiently verifies condition~\eqref{verify:condition} by propagating symbolic bounds on $F$ over domain $\mathcal{B}$ to get the linear lower bound of $F$ on $\mathcal{B}$, i.e., compute $\underline{a} \in \mathbb{R}^{n}, \underline{b} \in \mathbb{R}$ with
\begin{align}\label{veryfi:linear_bound}
    F(x) \geq \underline{a}^\top x + \underline{b}, \quad \forall x \in \mathcal{B}
\end{align}
We then claim verification successful if 
\begin{align}
    \inf_{x\in \mathcal{B}}\, \underline{a}^\top x + \underline{b} = (\underline{a}^+)^{\top}\ell + (\underline{a}^-)^{\top}u + \underline{b} > 0
\end{align}
holds with $\underline{a}^{+} = \max(\underline{a}, 0)$ and $\underline{a}^{-} = \min(\underline{a}, 0)$. To enable symbolic linear bound propagation with the full computation graph, we need to perform a \textbf{linear relaxation} of all primitive operators $H_i$, defined as follows.

\vspace{0.05in}

\begin{definition}[Linear relaxation]
Let $h:\prod_{j=1}^n \mathbb{R}^{p_j}\to\mathbb{R}^q$ and let the input domain be
a product box $Z=\prod_{j=1}^n [\ell_j,u_j]$, with $\ell_j,u_j\in\mathbb{R}^{p_j}$.
A \emph{linear relaxation} of $h$ on $Z$ is any collection of matrices and
vectors
\[
\big\{A_\ell^{(j)},A_u^{(j)}\in\mathbb{R}^{q\times p_j}\big\}_{j=1}^n,\quad
b_\ell,b_u\in\mathbb{R}^q
\]
such that for all $(x_1,\dots,x_n)\in Z$ the elementwise inequalities hold:
\[
\sum_{j=1}^n A_\ell^{(j)}x_j + b_\ell
\ \le\ h(x_1,\dots,x_n)\ \le\
\sum_{j=1}^n A_u^{(j)}x_j + b_u.
\]
Equivalently, stacking $x=\operatorname{col}(x_1,\dots,x_n)\in\mathbb{R}^{p}$
with $p=\sum_j p_j$ and similarly defining the global bounds $l = \operatorname{col}(\ell_1, \dots, \ell_n)$ and $u = \operatorname{col}(u_1, \dots, u_n)$ such that $Z=[\ell, u]$, a linear relaxation can be represented as a quadruple $(A_\ell,b_\ell,A_u,b_u)$ with $A_\ell,A_u\in\mathbb{R}^{q\times p}$,
$b_\ell,b_u\in\mathbb{R}^q$ satisfying
\[
A_\ell x + b_\ell \ \le\ h(x) \ \le\ A_u x + b_u.
\]
for all $x$ in the input domain, i.e., $x\in Z$.
\end{definition}

\vspace{0.05in}

Here, if $h$ is the primitive map of some node in a computation graph, the input box $Z$ is called the \textbf{preactivation bound} of this node. The tightness of the linear relaxation can be dependent on the preactivation bounds. Below, we give several examples of linear relaxation:
\paragraph{Affine map}
For $h(x)=Mx+c$ on any box $Z$, the relaxation is exact:
\begin{align}
A_\ell=M,\quad b_\ell=c,\qquad A_u=M,\quad b_u=c.
\end{align}
For multi-input $h(x^{(1)},\dots,x^{(r)})=\sum_{j=1}^r M_j x^{(j)}+c$,
use $A_{\ell}^{(j)}=A_{u}^{(j)}=M_j$ and $b_\ell=b_u=c$. This relaxation is exact.

\paragraph{Element-wise ReLU activation}
Let $h(x)=\mathrm{ReLU}(x)$ with $x\in[\ell,u]$. The linear relaxation $(A_\ell, b_\ell, A_u, b_u)$ can be chosen as
\begin{align}\label{relax:relu}
\text{LR}(h;[\ell,u]) =
\begin{cases}
(0,0,0,0),
& (u\le 0),\\
(1,0,1,0),
& (\ell\ge 0),\\
(\alpha,\,0,\,\frac{u}{u-\ell},\,-\frac{u}{u-\ell}\,\ell)
& (\ell<0<u)
\end{cases}
\end{align}
for any $\alpha\in[0,1]$. 
% We refer to the $u \leq 0$ and $\ell \geq 0$ case as \textbf{stable}, and the final case as \textbf{unstable}.

\paragraph{Sin} Let $h(x) = \sin(x)$ with $x \in [\ell,u]$. Because the sine function changes convexity, the linear relaxation must be adjusted based on the range of $\ell$ and $u$. A visualization of the linear bound can be found in Figure~\ref{fig:sin_bound}. Concretely, for an interval $[\ell,u] \subset [-\pi, \pi]$, the linear relaxation is constructed as follows (relaxations for other regions are derived similarly):
\begin{itemize}
    \item \textbf{Concave region ($\ell \ge 0$):} We use the secant line over $[\ell,u]$ as a valid lower bound, and an appropriate tangent line as an upper bound.
    \item \textbf{Convex region ($u \le 0$):} We use the secant line over $[\ell,u]$ as a valid upper bound, and an appropriate tangent line as a lower bound.
    \item \textbf{Mixed region ($\ell < 0 < u$):} Since the interval spans both convex and concave parts, we construct both bounds using tangent lines. 
    % \xiangru{Actually, it's possible that even in this case the direct secant line is still a valid bound, so it's not necessarily the tangent line. But, I checked the original CROWN paper, and this case is not discussed there, either.}
\end{itemize}

\begin{figure}[t]
    \centering
    \includegraphics[width=\linewidth]{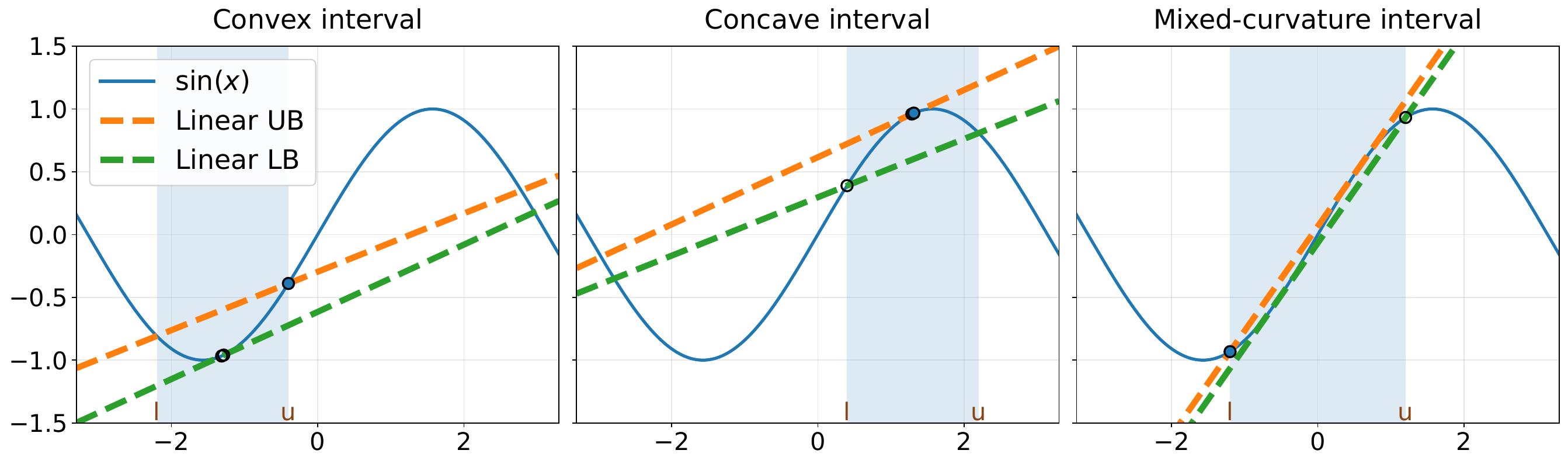}
    \caption{Linear lower and upper bound of $h(x) = \sin(x)$.}
    \label{fig:sin_bound}
\end{figure}

\paragraph{Element-wise Multiplication}
Let $h(x,y)=xy$ with $(x,y)\in[\ell_x,u_x]\times[\ell_y,u_y]$. The optimal linear relaxations are given by the McCormick envelopes, which establish two valid lower bounds and two valid upper bounds
\begin{equation}
    \begin{aligned}
        xy &\geq \ell_y x + \ell_x y - \ell_x \ell_y \\
        xy &\geq u_y x + u_x y - u_x u_y \\
        xy &\leq \ell_y x + u_x y - u_x \ell_y \\
        xy &\leq u_y x + \ell_x y - \ell_x u_y
    \end{aligned}
\end{equation}
Any convex combination of the lower bounds forms a valid lower relaxation, and any convex combination of the upper bounds forms a valid upper relaxation. For instance, selecting the first lower bound and the first upper bound yields the linear relaxation
\begin{equation}
    \mathrm{LR}(h;[\ell_x,u_x]\times[\ell_y,u_y]) = \left( [ \ell_y \,\,\,\ell_x], -\ell_x\ell_y, [\ell_y \,\,\, u_x] , -u_x\ell_y \right).
\end{equation}

We are now ready to introduce the actual backward symbolic bound propagation algorithm used in $\alpha,\!\beta$-CROWN. As a motivating example of how the algorithm works to derive a bound like in equation~\eqref{veryfi:linear_bound}, we consider a simple feed-forward neural network defined recursively as
\begin{equation}
\begin{aligned}
    F(x) &= c^\top z^{n}(x) \\
    z^{k+1}(x) &= \mathrm{ReLU}(W^{k+1}z^{k}(x)) \\
    z^{0}(x) &= x.
\end{aligned}
\end{equation}

Assume we have a verification target of $F(x) > 0$, and that all intermediate preactivation bounds have already been computed. We start the propagation with the exact equation $F(x) = c^\top z^n(x)$.
To propagate backward through the final ReLU layer, let $y^n(x) = W^n z^{n-1}(x)$ be the preactivation. From the linear relaxation in~\cref{relax:relu}, we can construct diagonal matrices $A_\ell^n, A_u^n$ and bias vectors $b_\ell^n, b_u^n$ such that
\begin{equation}
    A_\ell^n y^n(x) + b_\ell^n \leq \mathrm{ReLU}(y^n(x)) \leq A_u^n y^n(x) + b_u^n.
\end{equation}
Because $F(x)$ is a linear combination of these ReLU outputs weighted by $c$, we must split $c$ into its positive and negative components element-wise: $c^+ = \max(c, 0)$ and $c^- = \min(c, 0)$. Multiplying the relaxation by these components gives us valid bounds in terms of the preactivation $y^n(x)$
\begin{equation}
    \underline{A}_{\text{pre}}^{n} y^n(x) + \underline{b}_{\text{pre}}^{n} \leq F(x) \leq \overline{A}_{\text{pre}}^{n} y^n(x) + \overline{b}_{\text{pre}}^{n},
\end{equation}
where
\begin{equation}
\begin{aligned}
\overline{A}_{\text{pre}}^{n} &= (c^+)^\top A_u^n + (c^-)^\top A_\ell^n, \\
\overline{b}_{\text{pre}}^{n} &= (c^+)^\top b_u^n + (c^-)^\top b_\ell^n, \\
\underline{A}_{\text{pre}}^{n} &= (c^+)^\top A_\ell^n + (c^-)^\top A_u^n, \\
\underline{b}_{\text{pre}}^{n} &= (c^+)^\top b_\ell^n + (c^-)^\top b_u^n.
\end{aligned}
\end{equation}
Next, we substitute $y^n(x) = W^n z^{n-1}(x)$ to step backward through the linear layer. This yields the bounds in terms of the previous layer's post-activation $z^{n-1}(x)$:
\begin{equation}
\underline{A}^{n-1} z^{n-1}(x) + \underline{b}^{n-1} \leq F(x) \leq \overline{A}^{n-1} z^{n-1}(x) + \overline{b}^{n-1}
\end{equation}
where $\overline{A}^{n-1} = \overline{A}_{\text{pre}}^{n} W^n$ and $\underline{A}^{n-1} = \underline{A}_{\text{pre}}^{n} W^n$, while the biases remain unchanged ($\overline{b}^{n-1} = \overline{b}_{\text{pre}}^{n}$ and $\underline{b}^{n-1} = \underline{b}_{\text{pre}}^{n}$).

Now we can substitute $z^{n-1}(x)$ with $\mathrm{ReLU}(W^{n-1}z^{n-2}(x))$ and recursively continue this propagation until we reach the input $x$. This backward linear bound propagation naturally preserves the linear dependencies between nodes. As a result, it produces much tighter bounds compared to plain interval bound propagation (IBP), which naively pushes independent ranges forward. A full formal algorithm of this procedure is presented in \Cref{alg:backward-upper}. For simplicity of notation, we assume that there is only one output node and one input node, but the algorithm extends to multiple input/output nodes trivially. This algorithm can terminate with theoretical guarantees as follows.

\begin{algorithm}[t]
\caption{Backward Linear bound propagation}
\label{alg:backward-upper}
\begin{algorithmic}[1]
\Require Computation Graph $G=(V,E)$ of $F:\mathbb{R}^n\!\to\!\mathbb{R}$; for each non-input node $i$ a valid local relaxation $(A_{\ell,i}^{(j)},b_{\ell,i},A_{u,i}^{(j)},b_{u,i})$ w.r.t. parents $j\in u(i)$ given intermediate bounds and the input domain $\mathcal{B}$; output node $o$; input node $x$
\Ensure $\underline{A}\in\mathbb{R}^{1\times n}$, $\underline{b}\in\mathbb{R}$ with $F(x)\ge \underline{A}x+\underline{b}$ on $\mathcal B$
\State Set $\underline{A}_o\gets 1$, $\underline{d}\gets 0$; $\underline{A}_i\gets 0$ for all $i\neq o$
\State $deg_i\gets\textproc{GetOutDegree}(G)$ 
\State
initialize queue $Q\gets[o]$
\While{$Q\neq\emptyset$}
  \State $i\gets Q.\textproc{pop}()$
  \ForAll{$j\in u(i)$}
     \State $\Lambda_j \gets (\underline{A}_i)^{+} A_{\ell,i}^{(j)} + (\underline{A}_i)^{-} A_{u,i}^{(j)}$
     \State $\underline{A}_j \gets \underline{A}_j + \Lambda_j$
     \State $deg_j \gets deg_j - 1$
  \EndFor
  \State $\underline{d}\gets \underline{d} + (\underline{A}_i)^{+} b_{\ell,i} + (\underline{A}_i)^{-} b_{u,i}$
  \State $\underline{A}_i\gets 0$
  \ForAll{$j\in u(i)$}
    \If{$deg_j=0$ \textbf{and} $j$ is not input}
      \State $Q.\textproc{push}(j)$
    \EndIf
  \EndFor
\EndWhile
\State \textbf{return} $\underline{A}\gets \underline{A}_x$, $\underline{b}\gets \underline{d}$
\end{algorithmic}
\end{algorithm}

\begin{theorem}
    When Algorithm~\ref{alg:backward-upper} terminates, it is guaranteed that  we have
    \begin{align}
        F(x) \geq \underline{A}x + \underline{b}, \quad \forall x \in \mathcal{B}
    \end{align}
    and thus we obtain a sound lower bound of $F(x)$.
\end{theorem}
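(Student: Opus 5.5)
We argue by a loop invariant over the queue-based backward traversal. Fix any $x\in\mathcal{B}$ and let $h_i = h_i(x)$ denote the actual value of node $i$ under the forward evaluation of the computation graph. The invariant we maintain at the start of each iteration of the while loop is
\begin{align*}
F(x) \;\ge\; \sum_{i\in V} \underline{A}_i\, h_i + \underline{d},
\end{align*}
where only finitely many $\underline{A}_i$ are nonzero. Parameter nodes are treated as operator nodes with exact constant relaxation, or equivalently folded into $\underline{d}$. Initially $\underline{A}_o=1$, all other coefficients are zero, and $\underline{d}=0$, so the invariant holds with equality since $F(x)=h_o$.

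For preservation, consider popping node $i$. By the Definition of linear relaxation, the local relaxation is valid whenever the parents' values lie in their preactivation boxes. For $x\in\mathcal{B}$ this holds because the intermediate bounds supplied as input are assumed sound. We therefore have
\begin{align*}
\sum_{j\in u(i)} A_{\ell,i}^{(j)} h_j + b_{\ell,i} \;\le\; h_i \;\le\; \sum_{j\in u(i)} A_{u,i}^{(j)} h_j + b_{u,i}.
\end{align*}
Splitting $\underline{A}_i=(\underline{A}_i)^+ + (\underline{A}_i)^-$ and multiplying the lower inequality by the nonnegative part and the upper one by the nonpositive part (elementwise, so that the sign flip is handled correctly), we get
\begin{align*}
\underline{A}_i h_i \;\ge\; \sum_{j\in u(i)} \Lambda_j h_j + (\underline{A}_i)^+ b_{\ell,i} + (\underline{A}_i)^- b_{u,i}.
\end{align*}
This is exactly the update performed in lines 7--11 followed by zeroing $\underline{A}_i$, so the invariant is preserved.

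The delicate step is to show that, at termination, every nonzero coefficient sits on the input node. This is what makes the final sum equal to $\underline{A}_x x + \underline{d}$, with $h_x = x$.

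First we establish a claim: a node is pushed only after all of its children have been popped. This follows from the out-degree counter. Consequently, once $i$ is popped, $\underline{A}_i$ has received all of its contributions and never changes again, so zeroing it is final. Moreover, $\underline{A}_i$ is nonzero only if $i$ is reachable backward from $o$.

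We then argue by acyclicity that every non-input ancestor of $o$ is eventually pushed. Suppose some ancestor $j$ were never pushed. Then some child of $j$ that is an ancestor of $o$ would also never be popped, and following such children yields an infinite chain in a finite DAG, which is impossible.

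One subtlety is that \textproc{GetOutDegree} must count only edges to descendants that are themselves ancestors of $o$ (or $o$ must be the unique sink). We would state this assumption, consistent with the paper's single-output simplification. Under it, after the loop every non-input node has zero coefficient, and the invariant gives $F(x)\ge \underline{A}x+\underline{b}$ for all $x\in\mathcal{B}$.
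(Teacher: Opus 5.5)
Your proof is correct and follows essentially the same approach as the paper, which gives no proof of its own but defers to Theorem~1 of the \texttt{auto\_LiRPA} paper~\cite{xu2020automatic}, whose argument is the same induction: substituting each node's sign-split local relaxation into the running affine bound, in the reverse topological order enforced by the out-degree counter, preserves $F(x)\ge\sum_i \underline{A}_i h_i+\underline{d}$ until only the input node carries a coefficient. Your caveat that \textproc{GetOutDegree} must count only edges among ancestors of $o$ (automatic under the paper's single-output simplification) is a worthwhile clarification: without it, some nodes would never be processed, and their nonzero coefficients would be silently dropped from the returned bound, making it unsound.
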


For a proof of this theorem, we refer the reader to Theorem 1 in the original \texttt{auto\_liRPA} paper~\cite{xu2020automatic}. The algorithm requires sound linear relaxation, which in turn requires access to preactivation bounds. We note that these preactivation bounds can be obtained in several ways, where the simplest and tightest approach is to recursively apply Algorithm~\ref{alg:backward-upper}. To be more precise, for each node where we need its preactivation bound, we apply Algorithm~\ref{alg:backward-upper} to the subgraph of $G$ that has this specific node as output node. Since for each input node we already have its preactivation bound, i.e., the domain $\mathcal{B}$, the recursion will eventually terminate. 

Intuitively, backward bound propagation is analogous to gradient backpropagation: both traverse the computation graph and repeatedly propagate a local quantity at each node. The key difference is that gradient backpropagation propagates derivatives, whereas LiRPA propagates linear upper/lower relaxations, composing them across layers to obtain a global affine bound. This propagation is computationally lightweight: its runtime is comparable to a standard backward gradient pass, simply applying a closed-form local linear relaxation in place of a gradient at each node. This efficiency makes it practical to tighten certificates by running the propagation over many sub-domains of $\mathcal{B}$ in parallel or batch (input domain branch-and-bound), as detailed next.

From Algorithm~\ref{alg:backward-upper}, it is evident that any nonlinear node can be handled if it admits a linear relaxation that can be efficiently computed. 
$\alpha,\!\beta$-CROWN implements many commonly used nonlinearities like $\mathrm{ReLU}, \tanh, *, /$, trig functions, and many others. 
% Moreover, once those nonlinearities can be handled, by the nature of the algorithm and structure of the computation graph, all the compositions of these functions can be handled efficiently. 
We list some currently implemented nonlinearities in Table~\ref{tab:supported_operators}.

\begin{table}[t]
\centering
\caption{Operators supported by $\alpha,\!\beta$-CROWN.}
\label{tab:supported_operators}
\renewcommand{\arraystretch}{1.15}
\setlength{\tabcolsep}{4pt}
\begin{tabular}{p{0.28\columnwidth} p{0.62\columnwidth}}
\toprule
\bfseries Category & \bfseries Operators \\
\midrule
Element-wise arithmetic \& math
& add, sub, mul, div; neg, pow, abs, square, reciprocal, exp, log, sqrt; sin, cos, tan, atan \\
\midrule
Nonlinear activations
& relu, hardtanh; tanh, sigmoid, gelu \\
\midrule
Reduction
& reduce\_sum, reduce\_mean, reduce\_max; max, min \\
\midrule
Linear / Matrix
& linear; matmul \\
\midrule
Convolutional
& conv2d; conv\_transpose2d \\
\midrule
Shape manipulation
& reshape, flatten; transpose; squeeze, unsqueeze; expand; concat, slice; gather \\
\midrule
Stochastic / Regularization
& dropout; softmax \\
\bottomrule
\end{tabular}
\end{table}

% \textcolor{blue}{Discuss differentiability of the bound instead of certified training}

\subsection{Handling Jacobian}\label{subsec:Jacobian computation}

% \huan{Move right after the discussion on CROWN}

In control problems, verification tasks often require bounding not just the output of a computation graph $G$, but also the Jacobian of its induced function $F_G$. For example, analyzing the Lyapunov stability for a continuous-time system requires bounding the gradient $\nabla V$ as a function of its input state $x$~\cite{li2025neural}. The key idea is to extend the graph itself so that it explicitly encodes the gradient computation via the chain rule, just like the auto-differentiation. Starting from the output node, we backtrack through the graph, attaching a corresponding ``gradient node" to each operator by taking local derivatives. This augments the original computation graph with a new set of nodes and edges, forming an extended computation graph that represents the backward gradient propagation. Evaluating this extended graph directly yields the Jacobian with respect to the inputs of the original graph. The same linear relaxation-based techniques implemented in auto\_LiRPA can then be applied to this extended graph, thereby producing certified bounds on the Jacobian in exactly the same manner as bounds on the original function. This procedure allows us to treat Jacobian computation as essentially another operator on the computation graph that can be handled easily.

\subsection{Branch and Bound (BaB)}\label{sec:bab}

% \textcolor{blue}{Introduce upper bound, so that it naturally leads to optimization problem}

The linear bound obtained by bound propagation may be potentially conservative, in which case the affine bound may sit noticeably below $F$ on the domain $\mathcal{B}$, and thus misses some verifiable cases. To mitigate this issue, $\alpha,\!\beta$-CROWN can perform \textbf{input domain branch-and-bound} to dynamically partition the input domain $\mathcal{B}$ into many small pieces $\{\mathcal{B}_i\}_{i=1}^n$. It then runs Algorithm~\ref{alg:backward-upper} to get a lower bound of $F$ on each of the subdomains, and uses the minimum of these lower bounds as the final certificate. Intuitively, since a function could be better approximated by linear functions under a smaller input domain, the bound we get from Algorithm~\ref{alg:backward-upper} on each of the subdomains will be tighter. Furthermore, for a piecewise linear activation like $\mathrm{ReLU}$, partitioning the domain significantly increases the likelihood of neuron stabilization (as seen in Equation~\eqref{relax:relu}). The full algorithm of input branch-and-bound is detailed in Algorithm~\ref{alg:branch-and-bound}.

\begin{algorithm}[h]
\caption{Input Domain Branch-and-Bound Verification}
\label{alg:branch-and-bound}
\begin{algorithmic}[1]
\Require function $F$, initial domain $\Omega$, batch size $B$
\State $\mathcal{S} \gets \textsc{Stack}([\Omega])$ \Comment{unverified subdomains}
\While{$\mathcal{S} \neq \emptyset$}
  \State $\mathcal{B} \gets \Call{PopBatch}{\mathcal{S}, B}$ \Comment{pop up to $B$ subdomains}
  \State $\mathcal{U} \gets \emptyset$
  \For{each $D \in \mathcal{B}$}
    \State compute lower bounds of $F(x)$ over $D$
    \If{$F(x) > 0$ cannot be certified on $D$}
      \State add $D$ to $\mathcal{U}$
    \EndIf
  \EndFor
  \If{$\mathcal{U} \neq \emptyset$}
    \For{each $D \in \mathcal{U}$}
        \State $D_1, D_2$ $\gets$ \Call{Split}{D}; \Call{Push}{$\mathcal{S}, D_1, D_2$}
    \EndFor
  \EndIf
\EndWhile
\If{$\mathcal{S} = \emptyset$}
  \State \Return \textsc{Verified} \Comment{all subdomains verified}
\EndIf
\end{algorithmic}
\end{algorithm}

In practice, since bound propagation is very efficient and parallelizable, the efficacy of the BaB algorithm hinges on how we split each unverified domain in lines 10 and 11 of Algorithm~\ref{alg:branch-and-bound}. While partitioning the domain along any dimension shrinks the input space and generally improves the linear relaxation, splitting along different dimensions can yield vastly different improvements in the resulting lower bound. To this end, $\alpha,\!\beta$-CROWN implements several different heuristics for the $\text{SPLIT}$ function in line 11. Here we introduce the two most commonly used heuristics~\cite{bunel2018unified}.

\paragraph{Naive Branching} In this splitting heuristic, we simply split the longest edge of the input domain $\mathcal{B}$. To be more precise, suppose $\mathcal{B}= \prod_{i=1}^n [\ell_i, u_i]$. We first compute $k = \arg \max_i |u_i -\ell_i|$, and split $\mathcal{B}$ into 
\begin{equation}
\begin{aligned}
    \mathcal{B}_1 &= \prod_{i\neq k} [\ell_i, u_i] \times [\ell_i, \frac{\ell_i + u_i}{2}] \\
    \mathcal{B}_2 &= \prod_{i\neq k} [\ell_i, u_i] \times [ \frac{\ell_i + u_i}{2}, u_i]
\end{aligned}
\end{equation}
\paragraph{Smart Branching~\cite{bunel2018unified}} This differs from the naive branching only in how the split index $k$ is chosen. On the domain $\mathcal{B} = \prod_{i=1}^n [\ell_i, u_i]$, after the bound propagation, we get a sound affine lower bound of $F$ satisfying
\begin{align}
    F(x) \ge \underline{a}^\top x + \underline{b}, \quad \forall x \in \mathcal{B}
\end{align}
We split the index $j$ that satisfies
\begin{align}
    j = \arg \max_i |\underline{a}_i|(u_i - \ell_i),
\end{align}
which essentially computes the most sensitive dimension that could potentially improve the bound.

In practice, we often prefer smart branching over naive branching, although it does not always guarantee improvements. Thanks to the efficiency of the bound-propagation algorithm~\eqref{alg:backward-upper}, which can be done on GPUs, we can acceptably divide the domain $\mathcal{B}$ into tens of thousands of pieces. Furthermore, $\alpha,\!\beta$-CROWN implements efficient parallelization that effectively computes bounds on these divided domains simultaneously. After repeated branching, each domain will be tiny, and the linear over-approximation will often be tight enough to enable verification. This provides the main intuitive idea on how $\alpha,\!\beta$-CROWN achieves state-of-the-art verification results on problem~\eqref{verify:condition}.

% \textcolor{blue}{Introduce what the user should do if verification is unsuccessful. For each feature, briefly discuss and cite, mention how user can open it}

\subsection{Supporting General Satisfiability Problems}\label{subsec:general_spec}

% \huan{Connect to the Lyapunov example}

Up to now, we have focused on obtaining tight numeric and linear bounds for formulas like~\eqref{verify:condition}. In practice, when we solve satisfiability problems, specifications are often more complex and involve logical combinations of multiple constraints, just like the motivating example of Lyapunov stability in Section~\ref{sec:hook}. Thanks to the parallelizable nature of bound propagation and the flexibility of the branch and bound procedure, $\alpha,\!\beta$-CROWN can naturally handle such cases. In principle, $\alpha,\!\beta$-CROWN can handle many cases that a traditional SMT solver for real numbers, such as dReal, can handle, and in a much more scalable fashion. Here we discuss how $\alpha,\!\beta$-CROWN handles the following three typical forms.

\subsubsection{Multiple OR constraints}
An OR specification requires at least one of several conditions to hold:
\begin{equation}
    F_1(x) > 0 \lor F_2(x) > 0\lor \dots\lor F_k(x) > 0.
\end{equation}
To encode this efficiently, we exploit the fact that all operations in $\alpha,\!\beta$-CROWN are vectorized. A linear transformation on the final layer can re-parameterize the outputs into 
\begin{equation}
    \Tilde{F}(x) \succ_{\lor} 0,
\end{equation}
where verification succeeds if any entry of $\Tilde{F}(x)$ is positive (i.e., $\max_i \tilde{F}_i(x) > 0$). In other words, all disjunctions are checked in parallel as components of a tensor.
\begin{example}[Implication]
Suppose we want to verify a simple implication: $A(x) \to B(x)$. For instance:
\begin{itemize}
    \item \emph{If} the state is inside a ``trigger region'' $A(x): \Vert x\Vert \; \leq 1$,
    \item \emph{then} the system must satisfy a safety condition $B(x): g(x) \leq 0$.
\end{itemize}
This is equivalent to checking
\begin{equation}
    (\Vert x\Vert > 1)\lor (g(x) \leq 0).
\end{equation}
In $\alpha,\!\beta$-CROWN, this can be written in the form as
\begin{equation}
    \tilde{F}(x) = \begin{bmatrix}
        \Vert x\Vert - 1 \\
        -g(x)
    \end{bmatrix} \succ_{\lor} 0
\end{equation}
The bound of $\tilde{F}(x)$ above is computed in vector form. At any stage of the algorithm, if for a subdomain $\mathcal{B}$ the propagated bound already certifies $\tilde{F}\succ_{\lor} 0$, then the subdomain is considered verified. A real example of the implication type constraints is given in Section~\ref{sec:control_app}.
\end{example}

\subsubsection{Multiple AND constraints}
An AND specification requires \emph{all} constraints to hold simultaneously:
\begin{equation}
    F_1(x) > 0 \land F_2(x) > 0 \land \dots \land F_k(x) > 0.
\end{equation}
In $\alpha,\!\beta$-CROWN, each conjunct is considered separately. During BaB, all of them are added to the domain stack $\mathcal{S}$. Verification succeeds only when every conjunct has been individually certified (i.e., removed from the stack). The example in Section~\ref{sec:hook} is a real application of multiple AND constraints. 

\subsubsection{General Boolean specification}
More complex specifications may mix AND and OR operators. For example,
\begin{equation}
    (F_1(x)> 0 \lor F_2(x)> 0) \land (F_3(x) > 0) \lor (F_4(x) > 0).
\end{equation}
Such formulas can always be rewritten in conjunctive normal form (CNF): a conjunction of clauses, each of which is a disjunction. Each disjunctive clause can be handled as in case (1), and the outer conjunction is treated as in case (2).
% \footnote{In practice, when the specification is provided in the standard \texttt{vnnlib} format, what is passed to the verifier is the \emph{satisfiability condition}, i.e., the negation of the property we aim to verify. As a result, the specification is expressed in disjunctive normal form (DNF).}.

\subsection{Overall Pipeline for Solving Satisfiability Problems}
The $\alpha,\!\beta$-CROWN framework is designed to be flexible. In principle, multiple verifiers and falsifiers can be composed in different orders, depending on the verification target and efficiency considerations. However, in typical control problems, we often follow a practical pipeline that balances speed and completeness. The pipeline is illustrated in Figure~\ref{fig:abcrown}.

\begin{figure}[htb]
    \centering
    \includegraphics[width=\linewidth]{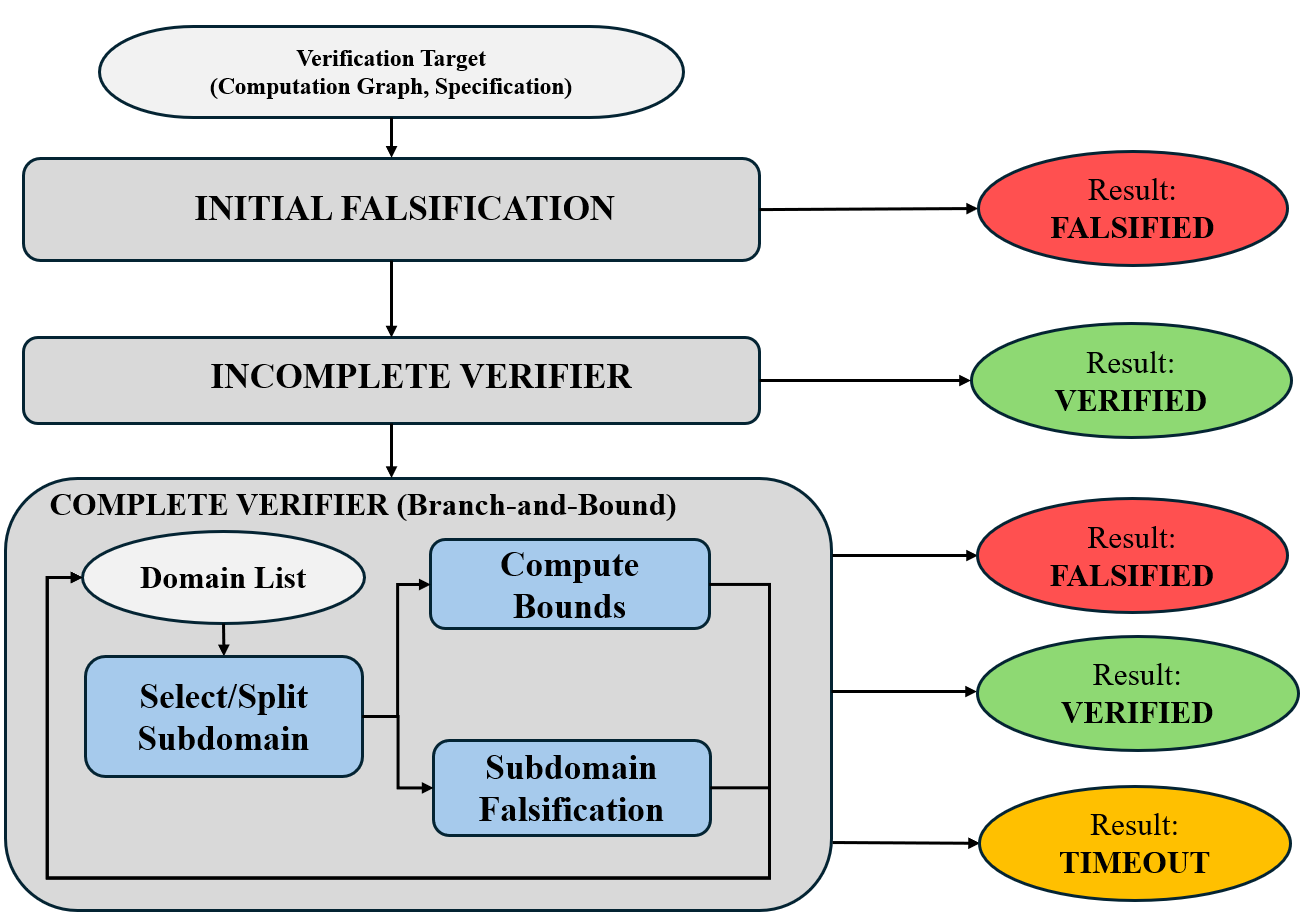}
    \caption{Overview structure of $\alpha,\!\beta$-CROWN}
    \label{fig:abcrown}
\end{figure}
\subsubsection{Initial Falsification}
The process begins with a falsifier, where we typically apply a projected gradient descent (PGD)~\cite{madry2017towards} to search for counterexamples to the verification condition. Concretely, PGD performs gradient-based minimization on the violation objective (e.g., the margin in~\eqref{verify:condition}) over a large batch of initial candidates, and after each step projects the candidates back onto the allowed input set. If a violating input $x\in \mathcal{B}$ is found such that the specification fails, the procedure terminates immediately with falsification.

\subsubsection{Incomplete verifier (bound propagation with no BaB)}
If no counterexample is found, we proceed with an incomplete verifier, which consists of a simple bound propagation pass with no branch-and-bound. This step computes sound but possibly conservative bounds with auto\_LiRPA.
\begin{itemize}
    \item If the bounds already certify the property, the procedure stops here and returns verification.
    \item Otherwise, the bounds provide initialization for the complete verifier, for example, intermediate layer preactivation bounds.
\end{itemize}

\subsubsection{Complete verifier (branch-and-bound)}
Finally, if the incomplete verifier is inconclusive, we use a complete verifier based on input domain branch-and-bound as described in \Cref{sec:bab}. The input domain is recursively partitioned into subdomains, and sound affine bounds are propagated on each of them. Verification succeeds once every subdomain has been refined sufficiently so that its propagated bounds alone can certify the property. In the meantime, during branch-and-bound, since the domain becomes smaller, counterexamples might be easier to locate. Therefore, we also perform a more fine-grained falsification on each subdomain during branch-and-bound to ensure that no counterexamples are missed. In practice, branch-and-bound runs under some budget, like a predefined time limit. If the budget is exhausted before the verifier can prove or refute the property, an inconclusive result is returned.

\section{Applications to Control}\label{sec:control_app}

% \textcolor{blue}{clearly write prerequisite of abcrown, need local region, f needs some specific form}

In this section, we provide several applications of using $\alpha,\!\beta$-CROWN in the control setting. In particular, since casting control certificates into a satisfiability form is often the most subtle step, we focus on the corresponding problem reformulations. We shall discuss how to use it to verify the stability certificate, safety certificate, and contraction.

Throughout this section, we consider either a closed-loop discrete-time system:
\begin{align}\label{sys:discrete}
    x_{t+1} = f(x_t, \pi(x_t)) = g(x_t),
\end{align}
or a closed-loop continuous system
\begin{align}\label{sys:continuous}
    \dot{x} = f(x, \pi(x)) = g(x).
\end{align}
We denote the solution of this system with initial condition $x_0 = x$ as $\phi(t,x)$. In what follows, we assume that the controller $u(\cdot)$ and corresponding certifications, such as Lyapunov function or contraction metrics, are given and focus exclusively on verification. We defer all details on how to co-learn the controller with a formal certificate to the next section. Furthermore, we assume that $f,\pi$ are general computation graphs with nonlinearities being those presented in Table~\ref{tab:supported_operators} that are supported by $\alpha,\!\beta$-CROWN.

A central challenge in applying $\alpha,\!\beta$-CROWN to control tasks is to make the classical certificate checkable by a verifier that only supports box input domains. For example, since the traditional Lyapunov theory and barrier function theory need forward invariance, they are naturally stated on implicitly defined sets such as the sublevel set $\Omega_{\rho}^V = \{x: V(x) \leq \rho\}$, whereas $\alpha,\!\beta$-CROWN can only verify conditions in a box domain $\mathcal{B}$. The naive workaround, which is to verify the decrease condition over the entire box $\mathcal{B}$ and then locate the maximal invariance set inside it, is typically overly conservative and is likely to fail even when a valid certificate exists. Therefore, we instead rewrite each certificate as an implication on the sublevel set, and then convert it into an equivalent satisfiability condition over the surrounding box. In the following sections, we shall demonstrate in detail how to reformulate classical control certificates, including discrete-time and continuous-time Lyapunov analysis, continuous-time barrier function, and discrete-time contraction analysis, into verifier-friendly constraints over a box domain. 

% \huan{Motivate what are in general needed to use alpha-beta-CROWN? What are the assumptions? How does it differ from previous work?}

\subsection{Reachability Analysis}
Before we dive into the actual certification criterion with $\alpha,\!\beta$-CROWN, we shall first briefly discuss how to use the bound produced by auto\_LiRPA to conduct reachability analysis for a discrete-time system. Reachability analysis asks: given an initial set of states $X_0$ and a closed-loop dynamics map $g(x)=f(x,\pi(x))$, what set of states can the system reach after one or multiple time steps? The reachable set computed can be used by downstream tasks, such as safety verification over a finite horizon. In general, exactly computing $X_{t+1}=g(X_t)$ is intractable for nonlinear dynamics and neural-network controllers. Instead, we aim to compute a sound over-approximation $\widehat{X}_{t+1}\supseteq g(X_t)$ that is cheap to obtain but guaranteed to contain all successors. auto\_LiRPA provides exactly this capability: by treating $g$ as a computation graph and $X_t$ as an input domain (typically a box or a polytope), it can certify elementwise bounds on $g(x)$ for all $x\in X_t$, which directly yields a sound reachable set over-approximation, and iterating this one-step over-approximation produces a forward reachable tube. 

\subsection{Discrete-Time Lyapunov Analysis}\label{sec:discrete_Lyapunov}
The goal of stability analysis is to certify that the controller $\pi(\cdot)$ stabilizes the systems in \eqref{sys:discrete} and \eqref{sys:continuous}, and to characterize the domain of initial states from which trajectories are guaranteed to converge to the equilibrium. Formally, we aim to find the region of attraction (ROA) of the system, defined as follows.

\begin{definition}[Region of Attraction (ROA)~\cite{khalil2002nonlinear}]
    Region of attraction for the system~\eqref{sys:discrete} or~\eqref{sys:continuous} is the set $\mathcal{R}$ such that $\lim_{t\to \infty} \phi(t,x_0) = x^\ast$ for all $x_0 \in \mathcal{R}$. 
\end{definition}

In practice, Lyapunov theory is the standard tool to certify such guarantees: one seeks a scalar “energy” function $V$ whose value decreases along closed-loop trajectories, so that the trajectories from a forward invariant set converge to the equilibrium, yielding an inner approximation of the ROA~\cite{lyapunov1992general}. We now state a Lyapunov-based theorem that formalizes this certification criterion that can be used by $\alpha,\!\beta$-CROWN. We first start with discrete-time systems~\cite{yang2024lyapunov}. 

\begin{theorem}[Lyapunov Theorem (Discrete-Time)~\cite{bof2018lyapunov}]\label{thm:lyapunov}
Given a forward invariant set $\mathcal{S}$ containing the equilibrium, if there exists a function $V: \mathcal{S} \to \mathbb{R}$ such that we have
\begin{equation}
\begin{aligned}
V(x^{\ast})&=0\\
V(x)&>0\;\;(\forall\,x \in \mathcal S\setminus\{x^{\ast}\})\\
V\left(g(x_t)\right) - V(x_t)&\leq-\kappa V(x_t)\;\;(\forall\,x_t \in \mathcal S)
\end{aligned}
\end{equation}
for some $\kappa > 0$, then $\mathcal{S}$ is a subset of the region of attraction for the discrete-time system~\eqref{sys:discrete}.
\end{theorem}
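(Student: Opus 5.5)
The plan is to show that $V$ decays geometrically along every closed-loop trajectory that starts in $\mathcal S$, and then to turn ``$V(x_t)\to 0$'' into ``$x_t\to x^\ast$'' using positive definiteness of $V$.

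\emph{Step 1: geometric decay.} Fix $x_0\in\mathcal S$ and let $x_{t+1}=g(x_t)$. Since $\mathcal S$ is forward invariant, every $x_t$ lies in $\mathcal S$, so the decrease condition applies at each step and gives $V(x_{t+1})\le (1-\kappa)V(x_t)$.
\begin{itemize}
\item We may assume $\kappa\le 1$. If $\kappa>1$ and some $x\ne x^\ast$ in $\mathcal S$ existed, then $V(g(x))\le (1-\kappa)V(x)<0$, which contradicts $V\ge 0$ on $\mathcal S$ (note $g(x)\in\mathcal S$). So $\kappa>1$ forces $\mathcal S=\{x^\ast\}$, which is trivial.
\item Applied at $x^\ast$, the condition gives $V(g(x^\ast))\le 0$. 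Hence $g(x^\ast)=x^\ast$, so $x^\ast$ is indeed a fixed point.
\item Inducting on $t$ yields $0\le V(x_t)\le (1-\kappa)^t V(x_0)$, and therefore $V(x_t)\to 0$.
\end{itemize}

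\emph{Step 2: from $V(x_t)\to 0$ to $x_t\to x^\ast$.} This is the step I expect to be the main obstacle. Positive definiteness alone does not suffice: $V$ could tend to $0$ along a sequence escaping to infinity, or along a sequence approaching a boundary point where $V$ is discontinuous.

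I would therefore make explicit the standing regularity assumptions implicit in the statement and in \cite{bof2018lyapunov}. In the paper's setting, $\mathcal S$ is a sublevel set contained in the box $\mathcal B$, and $V$ is a computation graph, so $V$ is continuous and $\mathcal S$ is compact (closed and bounded).

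The argument then runs by contradiction. Suppose $x_t\not\to x^\ast$. Then there exist $\varepsilon>0$ and a subsequence with $\|x_{t_k}-x^\ast\|\ge\varepsilon$. The set $K_\varepsilon=\mathcal S\cap\{x:\|x-x^\ast\|\ge\varepsilon\}$ is compact and excludes $x^\ast$. The continuous function $V$ therefore attains a minimum $m_\varepsilon>0$ on $K_\varepsilon$ (the case $K_\varepsilon=\emptyset$ is immediate). This gives $V(x_{t_k})\ge m_\varepsilon$ for all $k$, which contradicts Step 1.

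\emph{Step 3: conclusion.} Every $x_0\in\mathcal S$ satisfies $\phi(t,x_0)\to x^\ast$, so $\mathcal S\subseteq\mathcal R$ by the definition of the region of attraction.

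As an optional remark, Step 2 shows that $\min_{K_\varepsilon}V>0$ for every $\varepsilon$. Combined with continuity of $V$ at $x^\ast$, this also gives Lyapunov stability of $x^\ast$ in the usual $\varepsilon$–$\delta$ sense. The theorem only asks for attraction, however, so I would keep the proof to Steps 1–3.
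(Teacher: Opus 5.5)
The paper gives no proof of this theorem; it imports it from \cite{bof2018lyapunov}, so there is no in-paper argument to compare against. Your proof is the standard one and is correct under the hypotheses you add.

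You are right that those hypotheses are genuinely needed. Take $\mathcal S=[0,\infty)$ with $x^\ast=0$, and set $g(x)=x/2$, $V(x)=x$ for $x<1$, and $g(x)=2x$, $V(x)=1/x$ for $x\ge 1$. Then $V$ is continuous and every stated hypothesis holds with $\kappa=1/2$, yet $x_0=1$ gives $x_t=2^t\to\infty$.

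What Step~2 really uses is only $m_\varepsilon:=\inf\{V(x): x\in\mathcal S,\ \|x-x^\ast\|\ge\varepsilon\}>0$ for every $\varepsilon>0$. Continuity plus compactness is just one convenient way to guarantee this. Because the decrease condition is exponential, Step~1 gives $V(x_t)\to 0$ directly. This spares you the extra compactness argument the textbook proof needs to rule out $V(x_t)$ stalling at a positive level when only $\Delta V<0$ is assumed.

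\textbf{Correction.} Your claim that in the paper's setting $\mathcal S$ is compact is not right. The paper defines $\mathcal S=\{x\in\mathcal B: V(x)<\rho\}$ with a strict inequality, so it is bounded but in general not closed, and your $K_\varepsilon$ need not be compact. The repair is short.
\begin{itemize}
\item If $V$ is continuous on the box $\mathcal B$, then $\overline{K_\varepsilon}\subseteq\{x\in\mathcal B: V(x)\le\rho,\ \|x-x^\ast\|\ge\varepsilon\}$ is compact.
\item Any point of $\overline{K_\varepsilon}\setminus K_\varepsilon$ lies outside $\mathcal S$, so it satisfies $V=\rho>0$. Hence $V>0$ on all of $\overline{K_\varepsilon}$, and $m_\varepsilon\ge\min_{\overline{K_\varepsilon}}V>0$.
\item Alternatively, Theorem~\ref{thm:discrete_stability_condition} assumes $V>0$ on all of $\mathcal B\setminus\{x^\ast\}$, so you can work on the compact set $\{x\in\mathcal B:\|x-x^\ast\|\ge\varepsilon\}$ instead.
\end{itemize}
For the abstract statement, with $V$ defined only on $\mathcal S$, the right standing assumption is that $V$ extends continuously to a compact $\overline{\mathcal S}$ and stays positive on $\overline{\mathcal S}\setminus\{x^\ast\}$. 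Continuity on a bounded but non-closed $\mathcal S$ alone is not enough: trajectories could then converge to a point of $\overline{\mathcal S}\setminus\mathcal S$ at which $V$ tends to $0$.
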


Now suppose we have synthesized a candidate $V$ through learning or other means, and would like to formally verify its correctness with $\alpha,\!\beta$-CROWN. As the verifier can only handle a box input, we begin by fixing a box-shaped region of interest $\mathcal{B}$ containing the equilibrium state $x^\ast$ and constraining the forward invariant set $\mathcal{S}$ to be the intersection of the box $\mathcal{B}$ with a sublevel set of $V$, as
\begin{align}
    \mathcal{S} := \{x\in B\mid V(x) < \rho\}
\end{align}
for some $\rho > 0$ that ensures for any $x \in \mathcal{S}$, we have $x_{t+1} \in \mathcal{B}$. Clearly, by the sublevel set structure, if the Lyapunov function values are certified to decrease along the trajectory, $x_{t+1}$ will also land in $\mathcal{S}$, making $\mathcal{S}$ forward invariant. Therefore, we reach the following verification condition. 

\begin{theorybox}{Discrete-Time System Stability Verification}
\begin{theorem}\label{thm:discrete_stability_condition}
Suppose that $V(x^*) = 0$ and $V$ is positive everywhere else. Let $F(x)\!:=\!V(g(x))\!-\!(1-\kappa)V(x)$. If the condition
\begin{equation}\label{eq:discrete_stability_condition}
    (-F(x) \geq 0 \wedge g(x) \in \mathcal{B}) \vee (V(x) \geq \rho), \quad \forall x \in \mathcal{B}
\end{equation}
holds for some $\rho > 0$, then $\mathcal{S}$ is a certified ROA inner-approximation for the closed-loop system~\eqref{sys:discrete}.
\end{theorem}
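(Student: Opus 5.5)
The plan is to reduce the statement to Theorem~\ref{thm:lyapunov} with $\mathcal{S} = \{x\in\mathcal{B} \mid V(x) < \rho\}$. The first two hypotheses of that theorem, $V(x^*)=0$ and $V(x)>0$ on $\mathcal{S}\setminus\{x^*\}$, are assumed directly in the statement. So two things remain: (i) the decrease inequality $V(g(x)) - V(x) \le -\kappa V(x)$ on $\mathcal{S}$, and (ii) forward invariance of $\mathcal{S}$. Once both hold, Theorem~\ref{thm:lyapunov} gives that $\mathcal{S}$ lies in the region of attraction.

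For (i), take any $x\in\mathcal{S}$. Then $x\in\mathcal{B}$ and $V(x)<\rho$, so the disjunct $V(x)\ge\rho$ in \eqref{eq:discrete_stability_condition} fails. The condition therefore forces the other disjunct: $-F(x)\ge 0$ and $g(x)\in\mathcal{B}$. Unfolding $F$, the inequality $-F(x)\ge0$ reads $V(g(x)) \le (1-\kappa)V(x)$, which is exactly the required decrease inequality.

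For (ii), keep the same $x\in\mathcal{S}$. We already have $g(x)\in\mathcal{B}$. Next we need $V(g(x))<\rho$. From (i), $V(g(x)) \le (1-\kappa)V(x)$, and $V(x)\ge 0$ because $V$ is nonnegative.
\begin{itemize}
\item If $V(x)>0$ and $\kappa>0$, then $(1-\kappa)V(x) < V(x) < \rho$.
\item If $V(x)=0$, then $V(g(x))\le 0$. Since $V\ge 0$ on $\mathcal{B}$ and $g(x)\in\mathcal{B}$, this gives $V(g(x))=0<\rho$.
\item If $\kappa\le 0$, we still get $V(g(x))\le (1-\kappa)V(x)$; here I would appeal to the $\kappa>0$ implicit in Theorem~\ref{thm:lyapunov}.
\end{itemize}
In every case $g(x)\in\mathcal{S}$, and induction on $t$ shows $\phi(t,x_0)\in\mathcal{S}$ for all $t$.

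The main subtlety is bookkeeping of domains. The positivity hypothesis on $V$ must hold on $\mathcal{B}$, not merely on $\mathcal{S}$, so that $V(g(x))\ge0$ makes sense for $g(x)\in\mathcal{B}$. The clause $g(x)\in\mathcal{B}$ is exactly what keeps the successor inside the box, where the certificate has been checked. Without it, $V(g(x))<\rho$ alone would not place $g(x)$ in $\mathcal{S}$, since $\mathcal{S}$ is intersected with $\mathcal{B}$. I expect this forward-invariance step to be the only non-routine point; the rest is a direct case split on the disjunction followed by an invocation of Theorem~\ref{thm:lyapunov}.
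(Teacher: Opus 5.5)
Your proposal is correct and follows the paper's proof essentially line for line: for $x\in\mathcal{S}$ the disjunct $V(x)\ge\rho$ fails, so $g(x)\in\mathcal{B}$ and $V(g(x))\le(1-\kappa)V(x)<\rho$, which gives both the decrease condition and forward invariance needed to invoke Theorem~\ref{thm:lyapunov}. Your case split is harmless but unnecessary: $\kappa>0$ and $V(x)\ge 0$ give $(1-\kappa)V(x)\le V(x)<\rho$ directly, so the $V(x)=0$ case does not need $V(g(x))\ge 0$.
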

\end{theorybox}

\begin{proof}
    By theorem~\ref{thm:lyapunov}, we just need to check that $\mathcal{S}$ is forward invariant and the Lyapunov function decreases along trajectories for any trajectories in $\mathcal{S}$. We take any $x \in \mathcal{S}$. Then $x \in \mathcal{B}$ and $V(x) < \rho$. Therefore, we know that $g(x) \in \mathcal{B}$ and $V(g(x)) \leq (1-\kappa)V(x) < \rho$. Therefore, $g(x) \in \mathcal{S}$ and $\mathcal{S}$ is forward invariant. The above analysis also shows that the Lyapunov function values decrease along trajectories in $\mathcal{S}$. 
\end{proof}

Now, since the verification condition~\eqref{eq:discrete_stability_condition} is on a box $\mathcal{B}$, and all the conditions can be cast into scalar value inequalities, it can be directly handled by $\alpha,\!\beta$-CROWN as described in the general specification section~\ref{subsec:general_spec}. 

\subsection{Continuous-Time Lyapunov Analysis}
For continuous-time systems, the essence is similar~\cite{li2025neural}. To present the verification criterion, we first state the continuous-time Lyapunov theorem.

\begin{theorem}[Lyapunov Theorem (Continuous-Time)~\cite{lyapunov1992general}]\label{thm:cont_lyapunov}
Given a forward invariant set $\mathcal{S}$ containing the equilibrium, if there exists a function $V: \mathcal{S} \to \mathbb{R}$ such that we have
\begin{equation}
\begin{aligned}
V(x^{\ast})&=0\\
V(x)&>0\;\;(\forall\,x \in \mathcal S\setminus\{x^{\ast}\})\\
\nabla V(x) f\bigl(x,\pi(x)\bigr)&\leq-\kappa V(x)\;\;(\forall\,x \in \mathcal S)
\end{aligned}
\end{equation}
for some $\kappa > 0$, then $\mathcal{S}$ is a subset of the region of attraction for the continuous-time system~\eqref{sys:continuous}.
\end{theorem}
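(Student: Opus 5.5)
The plan is a standard comparison-lemma argument along trajectories. I assume that $V$ is continuously differentiable on a neighbourhood of $\mathcal{S}$, and that $g(x)=f(x,\pi(x))$ is locally Lipschitz, so solutions $\phi(t,x_0)$ exist and are unique. Since $\mathcal{S}$ is forward invariant, every solution starting in $\mathcal{S}$ stays in $\mathcal{S}$ on its whole interval of existence.

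\textbf{Step 1: exponential decay of $V$.} Fix $x_0\in\mathcal{S}$ and set $v(t)=V(\phi(t,x_0))$. By the chain rule, $\dot v(t)=\nabla V(\phi(t,x_0))\,g(\phi(t,x_0))\le -\kappa v(t)$. Then $\frac{d}{dt}\bigl(e^{\kappa t}v(t)\bigr)\le 0$, which gives
\[
0\le v(t)\le e^{-\kappa t}V(x_0).
\]

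\textbf{Step 2: global existence.} I would take $\mathcal{S}$ to be bounded, for instance contained in the box $\mathcal{B}$ as in the discrete-time setting of Theorem~\ref{thm:discrete_stability_condition}. If $\mathcal{S}$ is only assumed closed, I would additionally assume compactness. Since the trajectory stays in this bounded set, it cannot blow up in finite time, so the solution is defined for all $t\ge 0$.

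\textbf{Step 3: convergence to $x^\ast$.} From Step 1, $V(\phi(t,x_0))\to 0$. Suppose $\phi(t,x_0)$ did not converge to $x^\ast$. Then there is an $\varepsilon>0$ and times $t_k\to\infty$ with $\|\phi(t_k,x_0)-x^\ast\|\ge\varepsilon$. On the compact set $K_\varepsilon=\overline{\mathcal{S}}\setminus\{\|x-x^\ast\|<\varepsilon\}$, continuity and positivity of $V$ give $m_\varepsilon=\min_{K_\varepsilon}V>0$. This contradicts $V(\phi(t_k,x_0))\to 0$. Hence $\phi(t,x_0)\to x^\ast$, and so $\mathcal{S}\subseteq\mathcal{R}$.

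\textbf{Main obstacle: Step 3.} Positivity of $V$ is only stated on $\mathcal{S}$, so a minimizing sequence could accumulate at a boundary point of $\overline{\mathcal{S}}$ where $V$ vanishes. I would handle this in one of two ways. The first is to assume $V>0$ on $\overline{\mathcal{S}}\setminus\{x^\ast\}$ by continuity, which holds in the intended use where $\mathcal{S}=\{x\in\mathcal{B}: V(x)<\rho\}$ and $V$ is positive on all of $\mathcal{B}\setminus\{x^\ast\}$. The second is to use that the trajectory lies in the sublevel set $\{V\le V(x_0)\}\cap\mathcal{S}$ and argue there. Uniqueness of the equilibrium in $\mathcal{S}$ is also implicitly needed, and it follows because $\dot V<0$ away from $x^\ast$.
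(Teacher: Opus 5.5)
The paper does not prove this theorem; it cites it as the classical Lyapunov result, so there is no internal argument to compare with. Your route is the standard textbook proof: integrate $\dot v\le-\kappa v$ along trajectories, get global existence from boundedness, then use positivity of $V$ on a compact set away from $x^\ast$. It is correct under the hypotheses you add.

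Those additions are necessary, not cosmetic, because the statement as literally written is false. Without boundedness, take $\mathcal{S}=\{0\}\cup[2,\infty)$ with $g(0)=0$, $V(0)=0$, and $g(x)=x$, $V(x)=x^{-\kappa}$ for $x\ge 1$. Every stated hypothesis holds, yet trajectories from $[2,\infty)$ diverge.

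Boundedness alone is not enough either. Choose smooth $g,V$ on $\mathbb{R}$ with $g(x)=-x$, $V(x)=x^2$ near $0$, and $g(x)=-(x-1)$, $V(x)=(x-1)^2$ on $[3/4,3]$, and let $\mathcal{S}=\{0\}\cup(1,2]$. This $\mathcal{S}$ is bounded and forward invariant, $V>0$ on $\mathcal{S}\setminus\{0\}$, and $V'g=-2V$ on $(1,2]$. Yet every trajectory starting in $(1,2]$ converges to the equilibrium $1\in\overline{\mathcal{S}}\setminus\mathcal{S}$.

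The second example also settles your ``main obstacle''. Your first fix works: assume $V$ is continuous on $\overline{\mathcal{S}}$ and positive on $\overline{\mathcal{S}}\setminus\{x^\ast\}$, or simply take $\mathcal{S}$ compact. It holds in the setting you mention, $\mathcal{S}=\{x\in\mathcal{B}:V(x)<\rho\}$, whenever $V$ is continuous and positive on $\mathcal{B}\setminus\{x^\ast\}$, since $\mathcal{B}$ is closed and so $\overline{\mathcal{S}}\subseteq\mathcal{B}$.

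Your second fix does not work. For $x_0=2$, the set $\{V\le V(x_0)\}\cap\mathcal{S}$ is all of $\mathcal{S}$, and its closure still contains the zero of $V$ at $1$. Likewise, your uniqueness remark only excludes equilibria inside $\mathcal{S}$. The actual failure mode is a zero of $V$ (here an equilibrium) on $\partial\mathcal{S}\setminus\mathcal{S}$, where the decrease condition is never checked.

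So commit to the first fix, stated as a hypothesis together with $V\in C^1$ on a neighborhood of $\mathcal{S}$ and local Lipschitzness of $g$. Drop the second fix and the uniqueness remark; Step~3 already makes the latter redundant. With that, Steps~1--3 form a complete proof.
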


We note that this differs from the discrete-time Lyapunov theorem in the sense that it requires the differentiability of the function $V$, and the decrease condition is now formalized with a differential change rather than a discrete change. This makes it more difficult to parametrize $V$ to satisfy the positive definiteness condition directly when we wish to use a neural network based parametrization. Therefore, in practice, we exclude the verification on a small neighborhood around the origin. Moreover, the forward invariance condition in continuous-time theory is trickier than in the discrete-time counterpart. We must ensure that, on the intersection of the ROA boundary and the box boundary, the closed-loop vector field points inward so that trajectories cannot escape the box. In conclusion, the verification condition can be encoded in the following theorem. 

\begin{theorem}[\cite{li2025two}]\label{continuous_stability_condition}
Let \(0 < c_1 < c_2\). We denote $V^{\leq c} := \{x: V(x) \leq c \}$ and $V^{\leq c_2\setminus \leq c_1} := V^{\leq c_2} \setminus V^{\leq c_1}$. Suppose we have
\begin{align}
  \label{eq:thm_violation}
  &x\in V^{\leq c_2\setminus \leq c_1} \cap \mathcal{B}
  \implies
  \nabla V(x)\cdot f\bigl(x,\pi(x)\bigr)\leq -\kappa V(x),\\
  \label{eq:thm_boundary}
  &x\in \partial\mathcal{B}\cap V^{\leq c_2}
  \implies
  f\bigl(x,\pi(x)\bigr)\cdot \vec{n}(x) \leq 0,
\end{align}
 where $\vec{n}(x)$ is the outer normal vector of $\partial\mathcal{B}$ at $x$. Then both $V^{\leq c_1}$ and $V^{\leq c_2}$ are \emph{forward invariant}, and every trajectory initialized in $V^{\leq c_2}$ enters the smaller set $V^{\leq c_1}$ in finite time.
\end{theorem}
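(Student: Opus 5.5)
The proof uses a first-exit-time argument for forward invariance and then an integrated decay estimate for finite-time entry. I will assume $V$ is $C^1$, $f(\cdot,\pi(\cdot))$ is locally Lipschitz so solutions exist and are unique, and $V\ge 0$ on $\mathcal{B}$. The sublevel sets $V^{\leq c}$ are understood intersected with the box $\mathcal{B}$, i.e.\ $\Omega_c := V^{\leq c}\cap\mathcal{B}$. This is the set a box-restricted verifier can certify, and it is compact since $\mathcal{B}$ is.

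\textbf{Step 1: invariance of $\Omega_{c_2}$.} Take $x_0\in\Omega_{c_2}$ and suppose the trajectory leaves $\Omega_{c_2}$. Let $\tau=\sup\{t: \phi(s,x_0)\in\Omega_{c_2}\ \forall s\le t\}$. By closedness, $x_\tau=\phi(\tau,x_0)\in\Omega_{c_2}$, and there are two ways to exit.

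\emph{Exit through the level set.} Here $V(x_\tau)=c_2$ and $V$ exceeds $c_2$ immediately after $\tau$. Since $c_2>c_1$, we have $x_\tau\in V^{\leq c_2\setminus\leq c_1}\cap\mathcal{B}$, so \eqref{eq:thm_violation} gives $\tfrac{d}{dt}V(\phi(t,x_0))|_{t=\tau}\le-\kappa c_2<0$. Hence $V$ is strictly decreasing near $\tau$, which is a contradiction.

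\emph{Exit through the box.} Here $x_\tau\in\partial\mathcal{B}\cap V^{\leq c_2}$ and \eqref{eq:thm_boundary} gives $f\cdot\vec n\le 0$ at $x_\tau$. This is the delicate case, because a non-strict inequality does not by itself rule out tangential escape. I would apply a Nagumo-type tangency theorem for the closed convex set $\mathcal{B}$. The condition $f\cdot\vec n\le 0$ for every active face normal, including at corners with multiple normals, says the vector field lies in the tangent cone of $\mathcal{B}$ at every boundary point of $\Omega_{c_2}$. Combined with the strict decrease at the level boundary, the field lies in the tangent (Bouligand) cone of $\Omega_{c_2}$ at every boundary point. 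Nagumo's theorem, which requires uniqueness of solutions, then gives forward invariance of the closed set $\Omega_{c_2}$.

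I expect this step to be the main obstacle, specifically the corner and tangency subtleties. A more elementary route is to perturb the field by $-\epsilon\vec n$ to make the inequality strict, run the first-exit argument, and then let $\epsilon\to 0$ using continuous dependence on the vector field.

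\textbf{Step 2: invariance of $\Omega_{c_1}$.} The same argument applies. At a level-exit point $V(x_\tau)=c_1$, the point lies on the closure of the annulus. I would use continuity of $\nabla V\cdot f+\kappa V$ to extend \eqref{eq:thm_violation} to the level $c_1$, giving a derivative of at most $-\kappa c_1<0$. At box-exit points, $\partial\mathcal{B}\cap V^{\leq c_1}\subseteq\partial\mathcal{B}\cap V^{\leq c_2}$, so \eqref{eq:thm_boundary} applies again.

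\textbf{Step 3: finite-time entry.} Suppose the trajectory from $x_0\in\Omega_{c_2}$ stays in $\Omega_{c_2}\setminus\Omega_{c_1}$ on $[0,T]$; Step 1 guarantees it stays in $\Omega_{c_2}$. Then \eqref{eq:thm_violation} holds along the trajectory, so $\dot V\le-\kappa V\le-\kappa c_1$. Integrating gives $c_1< V(\phi(T,x_0))\le c_2-\kappa c_1T$. Therefore $T<(c_2-c_1)/(\kappa c_1)$, and the trajectory enters $\Omega_{c_1}$ within that time. By Step 2 it remains there afterward.
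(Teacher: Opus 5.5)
The tutorial gives no proof of this theorem; it is quoted from \cite{li2025two}, so there is no in-text argument to match. Your proof is correct. It is the natural first-exit plus decay-integration argument, made rigorous exactly where the statement is loose.

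Your restriction $\Omega_c=V^{\le c}\cap\mathcal{B}$ is necessary, not cosmetic. For the global sublevel sets as literally written, the conclusion can fail, because \eqref{eq:thm_violation} and \eqref{eq:thm_boundary} constrain nothing outside $\mathcal{B}$.

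Reading $\vec n(x)$ as every active face normal at edges and corners turns \eqref{eq:thm_boundary} into $f(x,\pi(x))\in T_{\mathcal{B}}(x)$, the tangent cone of the box. Your Nagumo step is then the right tool for the non-strict inequality. For a box every tangent direction is feasible, so $x+hf\in\mathcal{B}$ for small $h>0$. On $\{V=c_2\}\cap\mathcal{B}$ you also have $\nabla V\cdot f\le-\kappa c_2<0$, which gives $V(x+hf)<c_2$. Hence $f$ is tangent to $\Omega_{c_2}$ at every point, and uniqueness upgrades Nagumo's viability conclusion to invariance.

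If you would rather use the perturbation route, note that $\vec n$ is only defined on $\partial\mathcal{B}$. The field $f-\epsilon(x-\bar x)$, with $\bar x$ the center of $\mathcal{B}$, is strictly inward on every face and corner. By compactness it also keeps strict decrease on $\{V=c_2\}\cap\mathcal{B}$ for small $\epsilon$.

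Step 2 can be streamlined. Box exits from $\Omega_{c_1}$ are already excluded by Step 1, because $\Omega_{c_1}\subseteq\Omega_{c_2}$. You also need not extend \eqref{eq:thm_violation} to level $c_1$ by continuity. That extension is legitimate at a genuine exit point only because the trajectory just after $\tau$ lies in the annulus. A general point of $\{V=c_1\}\cap\mathcal{B}$, for instance a local maximum of $V$, need not lie in the annulus's closure. Instead, observe that whenever $V(\phi(t,x_0))>c_1$ the trajectory is in the annulus, so $V\circ\phi$ strictly decreases on any such interval. Taking the last time before such an instant at which $V=c_1$ then gives a contradiction.

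Finally, your hypothesis $V\ge 0$ is never used. Gr\"onwall's inequality also sharpens your entry-time bound to $\kappa^{-1}\ln(c_2/c_1)$.
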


To convert this condition into a form that $\alpha,\!\beta$-CROWN can verify, again, we need the verification to happen on the box domain $\mathcal{B}$. Formally, it can be converted into the following criterion.
\begin{theorybox}{Continuous-Time System Stability Verification}
\begin{theorem}
    Let $F(x):= \nabla V(x) \cdot f(x,\pi(x)) +\kappa V(x)$ and $G(x):= f(x,\pi(x))\cdot \vec{n}(x)$. The above theorem is equivalent to verifying
    \begin{align}
    (F(x) \leq 0) &\vee (V(x) > c_2) \vee (V(x) < c_1)
    \label{eq:continuous_stability_1}\\
    (G(x) \leq 0) &\vee (x \notin \partial\mathcal{B}) \vee (V(x) > c_2)
    \label{eq:continuous_stability_2}
    \end{align}
    for all $x \in \mathcal{B}$. Moreover, we shall see that since $\mathcal{B}$ is a box domain, the condition $x \in \partial \mathcal{B}$ can be easily converted into several numerical conditions. 
\end{theorem}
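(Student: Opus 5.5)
The claim is a purely logical reformulation, so I would prove it by showing that each implication in Theorem~\ref{continuous_stability_condition} is pointwise equivalent, for every $x\in\mathcal{B}$, to the corresponding disjunction. The only tool needed is the tautology $(P\Rightarrow Q)\equiv(\neg P\vee Q)$, together with De Morgan's law applied to the membership predicates. After establishing this equivalence, the conclusion of Theorem~\ref{continuous_stability_condition} follows directly by invoking that theorem.

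For \eqref{eq:thm_violation}, I would first unfold the premise. The statement $x\in V^{\leq c_2\setminus\leq c_1}\cap\mathcal{B}$ means $x\in\mathcal{B}$, $V(x)\le c_2$, and $V(x)>c_1$. Since we quantify only over $x\in\mathcal{B}$, the condition $x\in\mathcal{B}$ is automatic. Negating the remaining part gives $V(x)>c_2\vee V(x)\le c_1$. The consequent $\nabla V(x)\cdot f(x,\pi(x))\le-\kappa V(x)$ is exactly $F(x)\le 0$.

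This yields $(F(x)\le 0)\vee(V(x)>c_2)\vee(V(x)\le c_1)$. It differs from \eqref{eq:continuous_stability_1} only at the boundary case $V(x)=c_1$, where the statement writes the strict inequality $V(x)<c_1$. This mismatch is the one point needing care. Two routes handle it:
\begin{itemize}
\item \emph{Direct implication.} The stated disjunction, with $V(x)<c_1$, implies the derived one, since it demands $F\le0$ on the larger set $\{c_1\le V\le c_2\}$. So verifying \eqref{eq:continuous_stability_1} is sufficient for \eqref{eq:thm_violation}.
\item \emph{Converse via closure.} By continuity of $F$ and $V$, the condition $F\le 0$ on $\{c_1<V\le c_2\}\cap\mathcal{B}$ extends to its closure. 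This covers the level set $V=c_1$ wherever that level set is a limit of points with $V>c_1$, for instance when $c_1$ is not a local maximum value of $V$.
\end{itemize}
I would state the equivalence as holding up to this measure-zero boundary convention, and note that the direction actually used in practice, verification $\Rightarrow$ theorem hypothesis, holds unconditionally.

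For \eqref{eq:thm_boundary}, the same move gives $\neg(x\in\partial\mathcal{B}\wedge V(x)\le c_2)\vee G(x)\le 0$, which equals $(G(x)\le0)\vee(x\notin\partial\mathcal{B})\vee(V(x)>c_2)$. A subtlety here is that $\vec n(x)$ is only defined on $\partial\mathcal{B}$, so $G$ is meaningless in the interior. The disjunct $x\notin\partial\mathcal{B}$ makes its value irrelevant there, so any extension of $G$ to the interior works.

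Finally, I would address the remark about expressing $x\in\partial\mathcal{B}$ numerically. I would not use $x\notin\partial\mathcal{B}$ directly, because it is not a closed box condition. Instead I would decompose $\partial\mathcal{B}=\bigcup_{i}\big(\{x_i=\ell_i\}\cup\{x_i=u_i\}\big)$ and replace \eqref{eq:continuous_stability_2} by the conjunction over the $2n$ faces. On the face $x_i=u_i$ the outward normal is $e_i$, so the condition becomes $f_i(x,\pi(x))\le0\vee V(x)>c_2$ for $x$ in that face. On the face $x_i=\ell_i$ the condition becomes $-f_i\le 0\vee V(x)>c_2$. Each face is itself a (degenerate) box, so every clause is a disjunction of scalar inequalities over a box. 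This is exactly the CNF form handled in Section~\ref{subsec:general_spec}, which completes the reduction.
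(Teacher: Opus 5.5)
Your argument is correct and is essentially the paper's: the paper gives no separate proof, and the boxed conditions are simply each implication of Theorem~\ref{continuous_stability_condition} rewritten as $\neg P\vee Q$ over $\mathcal{B}$, with your face-by-face decomposition of $\partial\mathcal{B}$ making concrete the ``numerical conditions'' the paper only alludes to. Your caveat about the disjunct $V(x)<c_1$ is right and can be stated more sharply than ``up to a measure-zero set'': \eqref{eq:continuous_stability_1} as written is only sufficient for \eqref{eq:thm_violation} (exact equivalence needs $V(x)\le c_1$), and the converse can genuinely fail, e.g.\ at an interior local maximum of $V$ with value $c_1$, where $\nabla V=0$ gives $F=\kappa c_1>0$.
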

\end{theorybox}

Since in practice the sublevel set $\{x: V(x) \leq c_1\}$ is small, this theorem essentially proves that $\{x: V(x) \leq c_2\}$ is an inner approximation of the ROA. Now both conditions~\eqref{eq:continuous_stability_1} and~\eqref{eq:continuous_stability_2} can be verified by $\alpha,\!\beta$-CROWN using the strategy discussed in Section~\ref{subsec:general_spec}.

\subsection{Robust Control under Disturbance and Uncertainty}

In this section, we consider generalized systems with exogenous inputs 
\begin{align}\label{eq:GenSys}
    x_{t+1} = f(x_t, \pi(x_t), w_t) = g(x_t, w_t).
\end{align}
We wish to find a robust forward invariant set under all possible disturbances, and to achieve this, we would need a generalized Lyapunov function. Let's consider the case where the disturbances $w_t$ are bounded pointwise by $\phi(w_t) \leq \nu$. In this case, the following theorem provides a sufficient condition that is compatible with $\alpha,\!\beta$-CROWN. 
\begin{theorem}[Uniform Disturbance Robustness]\label{thm:pointwise_robust_roa}
Let $\mathcal{B} \times \mathcal{B}^w\subset \mathbb{R}^n \times\mathbb{R}^{n_w}$ be a box constrained set containing $(0,0)$ and \(V : \mathbb{R}^{n} \to \mathbb{R}\) be a continuous positive-definite function. Suppose that the admissible set of disturbances given by a constraint function $\psi :\mathbb{R}^{n_w} \rightarrow \mathbb{R}$ and $\nu > 0$ is non-empty
\begin{align}
\mathcal{W} := \{\mathbf{w} : w_k \in \mathcal{B}^w, \,\, \psi(w_k) \leq \nu, \, \,\forall k\}
\end{align}
which defines the following product domain:
\begin{align}
    \mathcal{S}^V_\rho := \{ (x,w) \in (\Omega^V_\rho \cap \mathcal{B})\times \mathcal{B}^w : \psi(w) \leq \nu \}
\end{align}
and suppose that the following holds for some $\kappa \in (0,1)$:
\begin{equation}\label{eq:pointwise_condition}
\begin{aligned}
\mathcal{S}^V_\rho \subseteq \{(x,w) : f(x,w) \in \mathcal{B},\,\, F_V(x,w) \leq 0\},
\end{aligned}
\end{equation}
where $F_V(x,w) = V(f(x,w)) - (1-\kappa)V(x) - \psi(w)$. If $\nu/\kappa \leq \rho$, $x_0 \in \mathcal{X}:=\Omega^V_\rho \cap \mathcal{B}$ and $\mathbf{w}\in \mathcal{W}$, then the solution to~\eqref{eq:GenSys} satisfies $(x_k, w_k) \in \mathcal{S}^V_\rho$ for all $k$. Furthermore, if there is no disturbance and $\mathbf{w}=0$ and $\psi(0) \leq 0$, then $\Omega^V_\rho \cap \mathcal{B}$ is a forward-invariant subset of the ROA.
\end{theorem}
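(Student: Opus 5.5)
The natural route is induction on the time index $k$, with the sublevel set $\Omega^V_\rho=\{x: V(x)\le\rho\}$ as the invariant. The inductive claim is: if $x_0\in\mathcal{X}=\Omega^V_\rho\cap\mathcal{B}$ and $\mathbf{w}\in\mathcal{W}$, then $x_k\in\mathcal{X}$ for every $k$. Once this holds, membership $(x_k,w_k)\in\mathcal{S}^V_\rho$ follows at once, because $w_k\in\mathcal{B}^w$ and $\psi(w_k)\le\nu$ hold for every $k$ by the definition of $\mathcal{W}$. The base case is exactly the hypothesis on $x_0$.

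\textbf{Inductive step.} Assume $x_k\in\mathcal{X}$, so that $(x_k,w_k)\in\mathcal{S}^V_\rho$.
\begin{itemize}
\item Condition~\eqref{eq:pointwise_condition} gives $x_{k+1}=f(x_k,w_k)\in\mathcal{B}$, which handles the box part.
\item The same condition gives $F_V(x_k,w_k)\le 0$, that is,
\begin{align*}
V(x_{k+1})\le(1-\kappa)V(x_k)+\psi(w_k)\le(1-\kappa)\rho+\nu .
\end{align*}
Here we used $V(x_k)\le\rho$, $1-\kappa>0$ and $\psi(w_k)\le\nu$.
\item Since $\nu/\kappa\le\rho$, we have $\nu\le\kappa\rho$, so the right-hand side is at most $(1-\kappa)\rho+\kappa\rho=\rho$. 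Hence $x_{k+1}\in\Omega^V_\rho$, which closes the induction.
\end{itemize}

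\textbf{Undisturbed case.} Suppose $\mathbf{w}=0$ and $\psi(0)\le 0$.
\begin{itemize}
\item Since $\nu>0\ge\psi(0)$ and $0\in\mathcal{B}^w$, the zero sequence lies in $\mathcal{W}$. The first part then shows that $\mathcal{X}$ is forward invariant under $x_{k+1}=g(x_k,0)$.
\item On $\mathcal{X}$ the decrease condition reads $V(g(x,0))-V(x)\le-\kappa V(x)+\psi(0)\le-\kappa V(x)$.
\item $V$ is positive definite; I take this to mean $V(x^*)=V(0)=0$ with $V>0$ elsewhere.
\item All hypotheses of Theorem~\ref{thm:lyapunov} now hold with $\mathcal{S}=\mathcal{X}$. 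That theorem gives that $\mathcal{X}$ is contained in the ROA. As a direct check, $V(x_k)\le(1-\kappa)^kV(x_0)\to 0$.
\end{itemize}

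\textbf{Main obstacle.} The only delicate point is the sublevel-set convention. If $\Omega^V_\rho$ is defined with the strict inequality $<\rho$ (as for $\mathcal{S}$ in the discrete-time section), the bound in the inductive step becomes $V(x_{k+1})<(1-\kappa)\rho+\nu\le\rho$, using the strict $V(x_k)<\rho$ and $1-\kappa>0$, so the argument still goes through. I would therefore state the convention explicitly and note that both versions work.

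A secondary point: the last step from $V(x_k)\to 0$ to $x_k\to 0$ requires that $V$ be positive definite on the compact set $\mathcal{B}$ and continuous there. Under those hypotheses this follows from Theorem~\ref{thm:lyapunov} as cited, and otherwise from a standard compactness argument.
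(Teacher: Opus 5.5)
Your proof is correct and is the argument the paper intends: the paper gives no separate proof of this theorem, only the remark that the per-step change of $V$ is bounded by $\psi(w_k)\le\nu\le\kappa\rho$, so trajectories stay in $\Omega^V_\rho$. That is your induction, and it mirrors the paper's proof of Theorem~\ref{thm:discrete_stability_condition} (one-step invariance from the box condition plus the decrease inequality, then Theorem~\ref{thm:lyapunov}). You fill in details the paper leaves implicit (the zero sequence lies in $\mathcal{W}$, either sublevel-set convention works, compactness gives $x_k\to 0$), and you could add that $V(f(0,0))\le(1-\kappa)V(0)+\psi(0)\le 0$ forces the origin to be an equilibrium, which justifies taking $x^*=0$ in Theorem~\ref{thm:lyapunov}.
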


This result shows that if the change in \(V(x_k)\) can be bounded at each step by a known function \(\psi(w_k)\), and if this bound stays within a fraction \(\kappa\) of the energy level \(\rho\), then the trajectory remains within the sublevel set \(\Omega^V_\rho\). Given a candidate storage function $V$ and the constraint function $\psi$, Theorem \ref{thm:pointwise_robust_roa} can be directly verified by $\alpha,\!\beta$-CROWN with the following specification:
\begin{theorybox}{Discrete-Time systems Stability Verification with Uniform Bounded Disturbance}
\begin{equation}\label{eq:ver1}
\begin{split}
\bigl(F_V(x,w) \le 0 \land f(&x,w) \in \mathcal{B}\bigr)\lor\, (V(x) > \rho) \\ &{}\lor (\psi(w) > \nu)
\end{split}
\end{equation}
\end{theorybox}
\noindent
by checking all $(x,w) \in \mathcal{B}^x \times \mathcal{B}^w$. A similar theorem can be derived to handle integral-constrained disturbances.

\subsection{Contraction Metrics}

% \huan{Also connect to the general assumption of using alpha-beta-CROWN}

We further discuss how to convert the verification criterion for contractions into forms that can be tackled by $\alpha,\!\beta$-CROWN. Traditionally, discrete-time contraction conditions are usually expressed as matrix inequalities~\cite{tran2018convergence}, which are difficult for scalar-valued verifiers to handle and also rely on differentiability of the dynamics. Therefore, we propose to reformulate the contraction condition into a scalar satisfiability problem as follows.

% \begin{theorem}[~\cite{tran2018convergence}]\label{thm:matrix_version}
%     The system~\eqref{sys:discrete} is contracting in $\mathcal{X}$ if there exists a nonsingular matrix-valued function $\Theta: \mathcal{X} \to \mathbb{R}^{n\times n}$ and positive constants $\mu, \eta, \rho$ such that for all $x \in \mathcal{X}$ we have
%     \begin{equation}
%         \eta I \leq \Theta(x)^\top\Theta(x) \preceq \rho I, \quad F(x)^\top F(x) - I \preceq -\mu I
%     \end{equation}
%     where $F(x)= \Theta(f(x))\frac{\partial f}{\partial x}(x) \Theta^{-1}(x)$.
% \end{theorem}

% This criterion has several drawbacks. First of all, the verification of matrix inequalities can be hard for verifiers. The most common way of verifying matrix inequalities relies on Sylvester's criterion, as used by~\cite{10714396}. As can be seen from theorem~\ref{thm:matrix_version}, we need to verify multiple semi-definite conditions, and each of them requires us to verify a determinant condition for each of the $2^{n} - 1$ principal minors. Therefore, the verification problem cannot scale. Secondly, this condition requires the system to be continuously differentiable, as it requires the Jacobian of the dynamics. However, currently, many state-of-the-art results have controllers synthesized by ReLU networks~\cite{yang2013neural,wu2023neural}, which makes this theorem inapplicable. Therefore, we devise a new contraction condition with only scalar value conditions, as follows:

\begin{theorem}[Contraction~\cite{li2025neural}]
Assume that $\mathcal{X}$ is an open, connected, and forward invariant subset in $\mathbb{R}^n$, and the dynamics $f$ is continuous. Given a positive threshold $\epsilon > 0$, suppose there exists uniformly continuous $M(x) : \mathcal{X} \to \mathbb{S}^{n\times n}_{++}$ with a uniform lower bound $\mu I \preceq M(x)$ that satisfies 
\begin{equation}\label{verification}
\begin{aligned}
     &\ \sqrt{(f(x) - f(y))^\top M(f(x))(f(x)-f(y))}  \\ 
     \leq&\  \rho \sqrt{(x-y)^\top M(x)(x-y)} 
\end{aligned}
\end{equation}
for all $x \in \mathcal{X}$ and $y \in B(x; \epsilon) \cap \mathcal{X}$ and some $0< \rho < 1$, then the system~\eqref{sys:discrete} is contracting with respect to the Riemannian metric $d$ induced by $M$, i.e, the system satisfies
\begin{align}
    d(f(x), f(y)) \leq \rho\,d(x,y)
\end{align}
for any $x, y \in \mathcal{X}$. 
\end{theorem}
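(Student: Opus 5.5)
The plan is to pass from the local, infinitesimal-scale inequality~\eqref{verification} to a global one by working with the Riemannian distance itself. Recall that $d(x,y)$ is the infimum of the lengths $L(\gamma)=\int_0^1\sqrt{\dot\gamma(s)^\top M(\gamma(s))\dot\gamma(s)}\,ds$ over piecewise-$C^1$ curves $\gamma$ in $\mathcal{X}$ joining $x$ to $y$. Since $\mathcal{X}$ is open and connected, such curves exist, so $d$ is finite. Fix $x,y\in\mathcal{X}$ and $\eta>0$, and pick a curve $\gamma$ with $L(\gamma)\le d(x,y)+\eta$. The goal is to show $d(f(x),f(y))\le \rho L(\gamma)+o(1)$ and then let $\eta\to 0$. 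Here we use that $f(\gamma)$ stays in $\mathcal{X}$ by forward invariance.

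\textbf{Step 1: discretize the curve.} Take a partition $0=s_0<\dots<s_N=1$ with $x_k=\gamma(s_k)$ and $\|x_{k+1}-x_k\|<\epsilon$, so that each consecutive pair satisfies~\eqref{verification}. This gives
\[
\textstyle\sum_k \sqrt{\Delta f_k^\top M(f(x_k))\Delta f_k}\le \rho\sum_k\sqrt{\Delta x_k^\top M(x_k)\Delta x_k},
\qquad \Delta f_k=f(x_{k+1})-f(x_k),\ \Delta x_k=x_{k+1}-x_k .
\]
As the mesh goes to $0$, the right-hand sum converges to $\rho L(\gamma)$. This uses uniform continuity of $M$ on $\gamma$ and the fact that Riemann sums of $\sqrt{\dot\gamma^\top M\dot\gamma}$ approximate the integral, with a first-order Taylor error controlled by the modulus of continuity of $\dot\gamma$.

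\textbf{Step 2: bound the left-hand sum below by $d(f(x),f(y))$.} This is the main obstacle, because $f$ is only continuous, so $f\circ\gamma$ need not be rectifiable. I would avoid differentiating $f$ entirely. The idea is to show that for nearby points $p,q$ the ``chord length'' $\sqrt{(q-p)^\top M(p)(q-p)}$ upper-bounds $d(p,q)$ up to a relative error that vanishes as $\|q-p\|\to 0$. To do this, take the straight segment from $p$ to $q$. Openness of $\mathcal{X}$ puts the segment in $\mathcal{X}$ once $q$ is close to $p$. Uniform continuity of $M$ gives $L(\text{segment})\le (1+\omega(\|q-p\|))\sqrt{(q-p)^\top M(p)(q-p)}$, where $\omega$ is small; the uniform lower bound $\mu I\preceq M$ lets us convert the additive error in $M$ into a relative one.

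We then apply this to $p=f(x_k)$, $q=f(x_{k+1})$. Continuity of $f$ on the compact set $\gamma([0,1])$ makes $\|\Delta f_k\|$ uniformly small once the mesh is fine. One subtlety is that the segment must lie in $\mathcal{X}$. I would handle this by compactness: $f(\gamma([0,1]))$ is a compact subset of the open set $\mathcal{X}$, so it has a positive distance to the complement, and small segments starting from it stay inside $\mathcal{X}$. The triangle inequality then gives $d(f(x),f(y))\le\sum_k d(f(x_k),f(x_{k+1}))\le(1+o(1))\sum_k\sqrt{\Delta f_k^\top M(f(x_k))\Delta f_k}$.

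\textbf{Step 3: conclude.} Chaining Steps 1 and 2 yields $d(f(x),f(y))\le(1+o(1))\rho\,(L(\gamma)+o(1))$. Letting the mesh go to $0$ and then $\eta\to0$ gives $d(f(x),f(y))\le\rho\,d(x,y)$.

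The delicate points are all in Step 2: making the relative error uniform, which is where $\mu I\preceq M$ and uniform continuity are essential, and keeping the segments inside $\mathcal{X}$.
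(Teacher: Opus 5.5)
Your proof is correct and complete. The tutorial gives no proof of this theorem; it quotes the result from \cite{li2025neural}, so there is no in-paper argument to match. Writing $\|v\|_{M(p)}:=\sqrt{v^\top M(p)v}$, your chain
\begin{align*}
d(f(x),f(y)) &\le \textstyle\sum_k d(f(x_k),f(x_{k+1}))\\
&\le (1+o(1))\textstyle\sum_k \|\Delta f_k\|_{M(f(x_k))}\\
&\le (1+o(1))\,\rho\textstyle\sum_k \|\Delta x_k\|_{M(x_k)}\\
&\le (1+o(1))\,\rho\, L(\gamma)
\end{align*}
uses only two easy comparisons: a distance is at most the length of the straight segment, and a chord is at most the arc. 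So you never need the reverse estimate $d(p,q)\ge(1-o(1))\|q-p\|_{M(p)}$, which would require controlling where short curves can wander.

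Step 1 also needs no convergence or Taylor argument. The triangle inequality for integrals gives $\|\Delta x_k\|_{M(x_k)}\le\int_{s_k}^{s_{k+1}}\|\dot\gamma(s)\|_{M(x_k)}\,ds$, and continuity of $M$ along $\gamma$ then gives the last line. This also causes no trouble at breakpoints of the piecewise-$C^1$ curve.

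What your route buys is that every uniformity comes from compact sets: $\gamma([0,1])$ and a closed neighbourhood of $f(\gamma([0,1]))$ inside $\mathcal{X}$. Consequently you only use that $M$ is continuous and pointwise positive definite, since $\lambda_{\min}(M)$ is automatically bounded away from $0$ on a compact set. The global uniform continuity and the bound $\mu I\preceq M$ in the statement are more than your proof needs. They are the hypotheses one would want for comparing $d(p,q)$ with $\|q-p\|_{M(p)}$ uniformly over all of $\mathcal{X}$. For non-convex $\mathcal{X}$, such a global comparison would also have to handle short segments that leave $\mathcal{X}$ near its boundary, and your compactness step settles that issue explicitly.
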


This condition can now be written in a form that is compatible with $\alpha,\!\beta$-CROWN, and the assumption is reduced to only assuming the continuity of the closed-loop dynamics, which can incorporate many neural-controlled systems. The forward invariant domain $\mathcal{X}$ can be found with Lyapunov sublevel sets or through condition~\eqref{eq:discrete_stability_condition}. Assuming we have already obtained such an invariant set, the contraction condition can be formulated as follows. We note that here the technique to transform the verification condition on the complex forward invariant set into a box is essentially the same as in the Lyapunov setting. Formally, the satisfiability problem to be solved can be defined as follows.

\begin{theorybox}{Discrete-Time System Contraction Analysis}
\begin{theorem}
Suppose we have already formally verified the forward invariance of
\[
\mathcal{X}:=\{x:V(x)<\rho\}\cap\mathcal{B}.
\]
For \((x,\delta)\in\mathcal{B}\times[-\epsilon,\epsilon]^n\), define
\[
\Delta_f(x,\delta):=f(x+\delta)-f(x),
\]
and
\begin{equation*}
G(x,\delta):=
\Delta_f(x,\delta)^\top M(f(x))\,\Delta_f(x,\delta)
-\rho^2\,\delta^\top M(x)\delta.
\end{equation*}
Assume that, for all \((x,\delta)\in\mathcal{B}\times[-\epsilon,\epsilon]^n\),
\begin{equation}\label{eq:contraction_condition}
\begin{aligned}
\bigl(x+\delta\in\mathcal{B}\bigr)\wedge\bigl(V(x)<\rho\bigr)\wedge\bigl(V(x+\delta)<\rho\bigr)
&\\
\implies G(x,\delta)\le 0.
\end{aligned}
\end{equation}
Then the system~\eqref{sys:discrete} is certifiably contracting on \(\mathcal{X}\) with contraction rate \(\rho'\in(\rho,1)\).
\end{theorem}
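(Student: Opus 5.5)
The plan is to reduce the statement to the preceding Contraction theorem from~\cite{li2025neural}, applied on $\mathcal{X}=\{x:V(x)<\rho\}\cap\mathcal{B}$ with the local condition~\eqref{verification}. Note that the symbol $\rho$ plays two roles in the statement, as the sublevel threshold and as the contraction factor. I will keep them apart in the argument, writing $\rho_V$ for the level and $\rho$ for the rate.

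\textbf{Step 1 (hypotheses of the Contraction theorem).} Forward invariance of $\mathcal{X}$ is assumed to be already verified, for instance through~\eqref{eq:discrete_stability_condition}. For openness, I would either use the interior of $\mathcal{B}$, or observe that the Contraction theorem only uses openness to connect points by short paths inside $\mathcal{X}$. Continuity of $V$ makes $\{V<\rho_V\}$ relatively open. Connectedness must be assumed or inherited; I would state it explicitly as a standing assumption. Finally, I would assume $M$ is uniformly continuous with $M\succeq\mu I$, as in the cited theorem.

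\textbf{Step 2 (translating~\eqref{eq:contraction_condition} into~\eqref{verification}).} Take $x\in\mathcal{X}$ and $y\in B(x;\epsilon)\cap\mathcal{X}$, using the $\ell_\infty$ ball so that it matches the box $[-\epsilon,\epsilon]^n$. Set $\delta=y-x$. Then $x\in\mathcal{B}$, $\delta\in[-\epsilon,\epsilon]^n$, $x+\delta=y\in\mathcal{B}$, $V(x)<\rho_V$ and $V(x+\delta)<\rho_V$. So the premise of~\eqref{eq:contraction_condition} holds, and therefore $G(x,\delta)\le 0$. This says
\[
(f(y)-f(x))^\top M(f(x))(f(y)-f(x))\le\rho^2\,(y-x)^\top M(x)(y-x).
\]
Taking square roots, which is legitimate since both sides are nonnegative because $M\succ0$, gives exactly~\eqref{verification} with rate $\rho$.

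\textbf{Step 3 (globalizing).} The Contraction theorem then yields $d(f(x),f(y))\le\rho\,d(x,y)$ on $\mathcal{X}$, and hence contraction at every rate $\rho'\in(\rho,1)$.

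The slack $\rho'>\rho$ is where I expect the main obstacle. The local inequality is stated with $M$ evaluated at the base point $x$ only, while the Riemannian length of a path uses $M$ along the path. Passing from local to global in the cited theorem presumably goes as follows. Subdivide a near-geodesic into pieces shorter than $\epsilon$. Then use uniform continuity of $M$ and $\mu I\preceq M$ to bound the mismatch between $\sqrt{\delta^\top M(x)\delta}$ and the true length of a short segment by a factor $1+\eta$, where $\eta\to0$ as the mesh shrinks. This produces the rate $\rho(1+\eta)^2<\rho'$.

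I would invoke the theorem's proof (or restate this subdivision argument) to absorb these errors into $\rho'$. I would also check that the subdivision points stay in $\mathcal{X}$, which is where openness and connectedness of $\mathcal{X}$ are used.
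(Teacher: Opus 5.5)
Your proposal is correct and is the reduction the paper intends; the paper gives no separate proof, only the remark that the box rewriting is the same trick as in the Lyapunov case. Setting $y=x+\delta$, the implication~\eqref{eq:contraction_condition} over $\mathcal{B}\times[-\epsilon,\epsilon]^n$ yields~\eqref{verification} on $\mathcal{X}$, the preceding Contraction theorem then gives rate $\rho$, and hence the statement holds for any $\rho'\in(\rho,1)$.

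Two side remarks. The slack $\rho'>\rho$ is not an obstacle, since in your subdivision sketch $\eta\to0$ as the mesh shrinks, so that argument even recovers rate $\rho$. For openness, keep your second fix rather than the first: $\{V<\rho\}$ intersected with the interior of $\mathcal{B}$ is not known to be forward invariant. Openness and connectedness are then carried as implicit hypotheses inherited from the cited theorem, which is also how the paper treats them.
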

\end{theorybox}

\subsection{Barrier Function}
We further consider a safety certificate with barrier functions~\cite{ames2016control}. In this setting, we wish to prove the forward invariance of a safe set $C$ of the following form:
\begin{align}
    C &= \{x\in \mathbb{R}^n: h(x) \geq 0\}, \\
    \partial C &= \{x\in \mathbb{R}^n: h(x) = 0 \}, \\
    \text{Int}(C) &= \{x\in \mathbb{R}^n: h(x) > 0\},
\end{align}
where $h$ is a continuously differentiable function. The forward invariance of $C$ can then be certified if $h$ is a zeroing barrier function on some box $\mathcal{D}$ with $C \subseteq \mathcal{D}$, i.e. it satisfies:
\begin{align}
    \nabla h(x) \cdot f(x,\pi(x)) \geq -\alpha\,h(x), \quad \forall x \in \mathcal{D} \cap \{h(x) \geq 0\}
\end{align}
for some $\alpha > 0$. This condition can then be directly verified by $\alpha,\!\beta$-CROWN. More generally, we consider the control barrier function setting. Specifically for this setting, we consider a control-affine system:
\begin{align}
    \dot{x} = f(x) + g(x)u,
\end{align}
where the $u$ can take values from a compact convex polytope $U \subset \mathbb{R}^m$. $h$ is called a control barrier function if it satisfies 
\begin{align}
    \sup_{u \in U}\,[\nabla h(x)\cdot f(x) + (\nabla h(x) \cdot g(x))u ] \geq -\alpha h(x), 
\end{align}
$\forall x \in \mathcal{D} \cap \{h(x) \geq 0\}$. As the left-hand side is linear in $u$, the supremum over $u$ can be converted to a maximum over the vertices of the convex polytope $U$, which can then be directly verified by $\alpha,\!\beta$-CROWN. Formally, we get the following verification criterion:
\begin{theorybox}{Continuous-Time Barrier Function Verification}
\begin{theorem}
Let \(\mathcal D \supseteq C\) be a box, and let \(\mathcal V(U)\) denote the set of vertices of \(U\).
Define
\[
B_h(x)
:= \max_{u\in \mathcal V(U)}
\Bigl[\nabla h(x)^\top f(x) + \bigl(\nabla h(x)^\top g(x)\bigr)u\Bigr]
+ \alpha h(x).
\]
Then \(h:\mathcal D\to\mathbb R\) is a control barrier function if and only if, for all \(x\in\mathcal D\),
\begin{equation}\label{eq:cbf_verify}
B_h(x)\ge 0 \;\vee\; h(x)<0.
\end{equation}
\end{theorem}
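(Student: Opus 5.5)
The equivalence has two parts. First, the supremum over the polytope $U$ equals the maximum over its vertex set $\mathcal V(U)$. Second, the universally quantified implication over $\mathcal D \cap \{h \geq 0\}$ can be rewritten as a disjunction over the whole box $\mathcal D$. Throughout, we take the definition of control barrier function on $\mathcal D$ to be
\[
\sup_{u\in U}\Bigl[\nabla h(x)^\top f(x) + \bigl(\nabla h(x)^\top g(x)\bigr)u\Bigr] \geq -\alpha h(x)
\quad \text{for all } x\in\mathcal D\cap\{h(x)\geq 0\}.
\]

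The first step is to fix $x\in\mathcal D$ and set $\phi_x(u):=\nabla h(x)^\top f(x)+(\nabla h(x)^\top g(x))u$. This is an affine function of $u$, hence continuous and convex. Since $U$ is a compact convex polytope, it has finitely many vertices, and by Minkowski's theorem (or Krein--Milman in finite dimensions) $U=\operatorname{conv}\mathcal V(U)$. Any $u\in U$ can be written as $u=\sum_k\lambda_k v_k$ with $v_k\in\mathcal V(U)$, $\lambda_k\geq 0$ and $\sum_k\lambda_k=1$. By affinity,
\[
\phi_x(u)=\sum_k\lambda_k\phi_x(v_k)\leq\max_k\phi_x(v_k).
\]
The reverse inequality holds because $\mathcal V(U)\subseteq U$. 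Therefore $\sup_{u\in U}\phi_x(u)=\max_{v\in\mathcal V(U)}\phi_x(v)$; in particular the supremum is attained, and the maximum is finite. It follows that the CBF inequality at $x$ is equivalent to $B_h(x)\geq 0$.

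The second step is the pointwise logical equivalence. The statement ``$h(x)\geq 0\implies B_h(x)\geq 0$'' is the same as $\lnot(h(x)\geq 0)\vee B_h(x)\geq 0$, that is, $h(x)<0\vee B_h(x)\geq 0$. Quantifying over all $x\in\mathcal D$ gives exactly condition~\eqref{eq:cbf_verify}, and conversely. Points of $\mathcal D$ with $h(x)<0$ satisfy the disjunction trivially, which matches the fact that the CBF condition imposes nothing there. The assumption $C\subseteq\mathcal D$ guarantees that $\mathcal D\cap\{h\geq 0\}=C$, so no part of the safe set is lost by restricting to the box.

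There is no real obstacle here; the only point that needs care is the vertex reduction. It must be justified through $U=\operatorname{conv}\mathcal V(U)$, which uses compactness (a bounded polytope), together with the affinity of $\phi_x$ in $u$. This affinity is exactly where the control-affine structure of the dynamics is used. Finally, we note that the resulting specification is a finite max followed by a disjunction, so it is admissible for \abCROWN via the OR handling of Section~\ref{subsec:general_spec}.
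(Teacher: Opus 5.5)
Your proposal is correct and takes the same approach as the paper. The paper justifies the theorem only by noting that the expression is affine in $u$, so the supremum over the polytope $U$ is attained at a vertex, and then relying on the implication-to-disjunction rewriting used throughout Section~\ref{sec:control_app}. Your convex-combination argument via $U=\operatorname{conv}\mathcal V(U)$, which needs compactness of $U$, makes that vertex reduction explicit.
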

\end{theorybox}

% \huan{Need to clear distinction of auto\_LiRPA and alpha-beta-CROWN. What is the difference? Need to mention that auto\_LiRPA implements which part of Section III, alpha-beta-CROWN implements which part of Section III.}
\section{Usage of \abCROWN}\label{sec:abcrown_usage}
The \abCROWN solver provides a unified, high-level entry point for working with neural networks in control and analysis tasks. Regardless of the task, the workflow consistently follows four main steps:

\begin{enumerate}
    \item \textbf{Computation Graph:} Define the system dynamics and neural network as a PyTorch Module. To ensure compatibility with the solver, the graph must adhere to the following rules:
    \begin{itemize}
        \item \textbf{Single Tensor Output:} The module must return exactly one tensor output, whose bounds will be computed. If multiple outputs are needed, they must be concatenated into a single tensor.
        \item \textbf{Batch Dimension:} Each input must include a leading \emph{batch dimension}, which should not be modified within the function.
    \end{itemize}
    \item \textbf{Variables and Solver Initialization:} Define symbolic input and output variables (e.g., \texttt{input\_vars}, \texttt{output\_vars}) that match the dimensions of your computation graph, and instantiate the \texttt{ABCrownSolver}.
    \item \textbf{Constraints:} Formulate the input domain and property targets using the \texttt{IOConstraints} Domain-Specific Language (DSL).
    \item \textbf{Execution:} Call the appropriate function for your mode (\texttt{verify()}, \texttt{compute\_bounds()}, or \texttt{minimize()}/\texttt{maximize()}) passing the defined constraints.
\end{enumerate}

\noindent
To see how these steps come together in practice, it is helpful to first understand the three primary ways to use the solver:

\begin{itemize}
    \item \textbf{Verification:} This mode mathematically proves whether a specific logical property (such as safety or stability) strictly holds across an entire continuous set of inputs.
    \item \textbf{Bound Computation:} Instead of answering a yes/no verification question, this mode calculates the guaranteed numerical lower and upper bounds of the computation graph's outputs for a given input domain. This is particularly useful for finding reachable sets.
    \item \textbf{Optimization:} This mode runs constrained optimization directly over the input domain to find a specific input that minimizes or maximizes an objective. It returns the optimal objective value (primal value) and the corresponding input that achieves it. The solver simply uses CROWN's bounding capabilities internally to efficiently guide the search for this optimal solution.
\end{itemize}

\noindent
We will illustrate the usage with the following examples.

\subsection{Verification Mode: Lyapunov Stability}

In verification mode, the solver attempts to mathematically prove that a specific property holds across an entire continuous input domain. For a system like the Van der Pol oscillator, we can use this mode to verify its neural Lyapunov stability \cite{li2025two}. 

The condition for stability is that if the state $x$ is within a specific level set shell $c_1 \leq V(x) \leq c_2$, then its time derivative $\dot{V}(x)$ must be strictly negative. The implication condition can be converted into an equivalent disjunction as in condition~\eqref{eq:continuous_stability_1}. With the mathematical formulation clear, our first practical step is to implement the computation graph that represents this system.

% We can express this logical implication as:
% \begin{equation}
%     (V(x) \geq c_1 \land V(x) \leq c_2) \to \dot{V}(x) < 0
% \end{equation}
% To feed this into the verifier, we convert the implication into an equivalent disjunction (OR logic):
% \begin{equation}
%     V(x) < c_1 \lor V(x) > c_2 \lor \dot{V}(x) < 0
% \end{equation}

\subsubsection{Computation Graph}

The graph takes the state, computes $V(x)$, calculates the control action, and simulates the dynamics. Note that \texttt{dynamics}, \texttt{controller}, and \texttt{lyapunov} passed into this graph are all expected to be pre-defined PyTorch Modules. To evaluate the derivative $\dot{V}(x)$ for this specific system, we need to compute the Jacobian $\nabla V(x)$. Instead of using standard auto-differentiation, the graph imports a specialized \texttt{JacobianOP} from \texttt{auto\_LiRPA}. When sent to the bound computation function internally, \texttt{auto\_LiRPA} automatically extends the computation graph with explicit ``gradient nodes" that encode the backward gradient propagation, allowing the solver to rigorously bound the derivative.

\begin{codebox}{Python}
import torch
import torch.nn as nn
from auto_LiRPA.jacobian import JacobianOP

class LyapunovComputationGraph(nn.Module):
    def __init__(self, dynamics, controller, lyapunov):
        super().__init__()
        # dynamics, controller, and lyapunov are all PyTorch Modules
        self.dynamics = dynamics
        self.controller = controller
        self.lyapunov = lyapunov

    def forward(self, x):
        x = x.clone().requires_grad_(True)
        V_x = self.lyapunov(x)
        u = self.controller(x)
        x_dot = self.dynamics(x, u)
        
        # Define the Jacobian operator; auto_LiRPA expands this internally
        dVdx = JacobianOP.apply(V_x, x).squeeze(1)
        V_dot = torch.sum(dVdx * x_dot, dim=1, keepdim=True)
        
        # Rule 1: Return a single concatenated tensor
        return torch.cat((V_x, V_dot), dim=1)

computation_graph = LyapunovComputationGraph(dynamics, controller, lyapunov)
\end{codebox}

\subsubsection{Variables and Solver Initialization}

We declare variables matching the dimensions of the state input and the stacked output conditions, then tie them to the computation graph by initializing the solver.

\begin{codebox}{Python}
from abcrown import (
    ABCrownSolver, ConfigBuilder, IOConstraints, input_vars, output_vars
)
# Note: These imports are needed for all the following examples.

# 2. Define graph input and output variables
x = input_vars(2)  # State vector
y = output_vars(2) # y[0] = V(x), y[1] = V_dot(x)

# 3. Create solver instance with the computation graph
cfg = ConfigBuilder.from_defaults()

solver = ABCrownSolver(computation_graph, x, y, config=cfg)
\end{codebox}

\subsubsection{Constraints}

Now that the variables are ready, we need to formalize the input limits and the property we want to verify using \texttt{IOConstraints}. Note that the API strictly requires output constraints to use \texttt{<} or \texttt{>}, not \texttt{<=} or \texttt{>=}.

\begin{codebox}{Python}
# 4. Create verification constraints
input_constraints = (x >= [-4.8, -10.8]) & (x <= [4.8, 10.8])

# Translate: V(x) < c1 OR V(x) > c2 OR V_dot(x) < 0
c1, c2 = 0.1, 2.0
output_constraints = (y[0] < c1) | (y[0] > c2) | (y[1] < 0.0)

constraints = IOConstraints(
    input_vars=x, output_vars=y,
    input_constraints=input_constraints, 
    output_constraints=output_constraints
)
\end{codebox}

\subsubsection{Execution}

Finally, we run the verification by passing the constraints to the solver.

\begin{codebox}{Python}
# 5. Run verification
result = solver.verify(constraints=constraints) 

print(f"Status: {result.status}, Verified: {result.success}")
\end{codebox}

\subsection{Bound Computation Mode: One-Step Reachability}
Instead of a Boolean yes/no verification, we can use the solver to directly extract guaranteed bounds on a network's outputs under a bounded input domain. This is highly applicable for tasks like one-step reachability analysis, where we want to bound the next state of a system $x_{t+1}$ given that the current state $x_t$ lies within a known hyperrectangle $\mathcal{B}_t$. 

Given discrete-time dynamics $x_{t+1} = f(x_t)$, if $x_t \in [x_{\min}, x_{\max}]$, bound computation finds tight bounds $x_{t+1, \min}$ and $x_{t+1, \max}$ such that:
\begin{equation}
    x_{t+1} \in [x_{t+1, \min}, x_{t+1, \max}]
\end{equation}

\subsubsection{Computation Graph}

Just like in verification, we start by defining the system. Here, we use a simple graph that wraps the neural network transition dynamics.

\begin{codebox}{Python}
class ReachabilityGraph(nn.Module):
    def __init__(self, nn_dynamics):
        super().__init__()
        self.nn_dynamics = nn_dynamics
        
    def forward(self, x_t):
        # x_t shape: [batch, 2]. Output represents x_{t+1}
        return self.nn_dynamics(x_t)

computation_graph = ReachabilityGraph(nn_dynamics).eval()
\end{codebox}

\subsubsection{Variables and Solver Initialization}

We define the symbolic variables corresponding to the graph's inputs and outputs, and use them to initialize the solver. By default, the solver iteratively refines its computed bounds using branch-and-bound until it reaches a standard timeout of 360 seconds. Often, however, users do not require extremely tight bounds and prefer to avoid long waiting times. To balance execution time with bound tightness, we can easily customize the configuration by limiting the execution time or the number of refinement steps. In this example, we reduce the timeout to 30 seconds and cap the maximum number of iterations at 100. It is also worth noting that setting ``\texttt{bab/max\_iterations}'' to 1 will bypass the branch-and-bound refinement entirely, quickly yielding the naive CROWN bound.

\begin{codebox}{Python}
# 2. Define graph input and output variables
x = input_vars(2)
y = output_vars(2)

# 3. Create solver instance with the computation graph
cfg = (
    ConfigBuilder.from_defaults()
    .set("bab/timeout", 30)
    .set("bab/max_iterations", 100)
)
solver = ABCrownSolver(computation_graph, x, y, config=cfg)
\end{codebox}

\subsubsection{Constraints}

For bounding, we only need to define the input domain. If no output constraint is provided, the solver defaults to computing the concrete upper and lower bounds of the objective.

\begin{codebox}{Python}
# 4. Create input constraints
# Define the set of possible current states
input_constraints = (x >= [-1.0, -1.0]) & (x <= [1.0, 1.0])

constraints = IOConstraints(
    input_vars=x, 
    input_constraints=input_constraints
)
\end{codebox}

\subsubsection{Execution}

Next, we invoke the solver by calling \texttt{.compute\_bounds()}. By default, the solver will only calculate and return the guaranteed concrete numerical bounds for the outputs.

\begin{codebox}{Python}
# 5. Compute bounds on y given constraints on x
# Default behavior: Returns a BoundsResult containing only lower and upper
bounds_result = solver.compute_bounds(
    constraints=constraints,
    objective=y
)

print("Next state lower bounds:", bounds_result.lower)
print("Next state upper bounds:", bounds_result.upper)
\end{codebox}

Additionally, \texttt{compute\_bounds()} can optionally provide the linear relaxation coefficients for the computation graph, which is incredibly useful for linearizing network dynamics in some downstream control tasks. 

By passing the argument \texttt{return\_linear\_bounds} as \texttt{True}, the returned \texttt{BoundsResult} object will also include a \texttt{linear\_bounds} variable. This returned object contains the matrices $A$ and vectors $b$ that define the linear bounds:
\begin{equation}
    A_l x + b_l \leq f(x) \leq A_u x + b_u
\end{equation}
By extracting these terms, users can explicitly reconstruct the linear relaxations.

\begin{codebox}{Python}
# Pass return_linear_bounds=True to extract relaxation terms
bounds_result_with_linear = solver.compute_bounds(
    constraints=constraints,
    objective=y,
    return_linear_bounds=True
)

if bounds_result_with_linear.linear_bounds is not None:
    print("Linear relaxation successfully computed!")
\end{codebox}

\subsection{Optimization Mode: Model Predictive Control (MPC)}
The API allows for constrained optimization, allowing you to maximize or minimize a linear objective expression directly over the input domain while strictly enforcing constraints on both the inputs and the network's outputs.

As an example, in a single-step Model Predictive Control (MPC) problem, we aim to find the optimal control action $u^*$ that minimizes a specific cost while respecting actuator limits and safety boundaries on the resulting state.

Suppose our system outputs the next state $x_{t+1} \in \mathbb{R}^2$. We want to minimize a linear combination of the next state's variables:
\begin{equation}
    \min_{u} \ x_{t+1, 1} + 0.5 x_{t+1, 2}
\end{equation}
subject to the system dynamics, the input actuator limits, and a safety constraint ensuring the first dimension of the next state does not exceed a threshold:
\begin{equation}
\begin{aligned}
    x_{t+1} &= f(x_t, u)\\
    -5 &\leq u \leq 5\\
    x_{t+1, 1} &< 2.0
\end{aligned}
\end{equation}

\subsubsection{Computation Graph}

Once again, we build the graph to represent the forward pass. The graph takes the control input $u$ and returns the simulated next state $x_{t+1}$. The initial state $x_t$ is treated as a constant inside the graph.

\begin{codebox}{Python}
class MPCGraph(nn.Module):
    def __init__(self, dynamics, x_current):
        super().__init__()
        self.dynamics = dynamics
        self.x_current = x_current 
        
    def forward(self, u):
        # Simulate next state
        x_next = self.dynamics(self.x_current, u)
        
        # Return next state (shape [batch, 2])
        return x_next

computation_graph = MPCGraph(dynamics, x_t).eval()
\end{codebox}

\subsubsection{Variables and Solver Initialization}

We declare $u$ as an input variable and bind it to the solver along with the output $y$.

\begin{codebox}{Python}
# 2. Define graph input and output variables
u = input_vars(1)
y = output_vars(2) # y represents the next state x_{t+1}

# 3. Build the solver
solver = ABCrownSolver(computation_graph, u, y)
\end{codebox}

\subsubsection{Constraints}

With the system dynamics defined, we set both the input limits and the output safety constraints. 

\begin{codebox}{Python}
# 4. Create constraints
# Actuator limits constraint (Input)
input_constraints = (u >= -5.0) & (u <= 5.0)

# Safety constraint on the next state (Output)
output_constraints = (y[0] < 2.0)

constraints = IOConstraints(
    input_vars=u, output_vars=y,
    input_constraints=input_constraints,
    output_constraints=output_constraints
)
\end{codebox}

\subsubsection{Execution}

To execute the optimization, we call \texttt{.minimize()}. We can pass a composite linear expression of the outputs to serve as our objective.

\begin{codebox}{Python}
# 5. Minimize the objective
# Define the objective as a linear expression of the outputs
objective_expr = y[0] + 0.5 * y[1]

# Returns OptimizationResult
result = solver.minimize(
    objective=objective_expr,
    constraints=constraints
)

if result.success:
    print(f"Optimal objective value (Primal Value): {result.primal_value}")
    print(f"Optimal control action (u*): {result.x_best}")
\end{codebox}

Note that the solver requires the objective expression to be a linear combination of the defined variables. If the objective is nonlinear (e.g., a quadratic cost $x_{t+1}^\top Q x_{t+1}$, the solution is simple: merge the formula to compute that nonlinear objective directly into the computation graph so that it is returned as an additional output dimension. You can then specify that single output dimension as the one to minimize or constrain.

\subsection{Satisfiability Mode: dReal-Like API}

In addition to the standard verification and optimization modes using explicit PyTorch computation graphs, $\alpha,\!\beta$-CROWN also provides a drop-in compatibility layer for the \texttt{dReal} Python API. While all SMT problems can be solved using \abCROWN's native API, users who are already familiar with using traditional SMT tools can easily use this \texttt{dReal}-like interface.

Users with existing \texttt{dReal} scripts can utilize $\alpha,\!\beta$-CROWN's solver engine without rewriting their constraint definitions or learning the new API. By simply changing the import statement, the solver seamlessly processes the standard \texttt{Variable}, \texttt{And}, and \texttt{CheckSatisfiability} commands. Here is an example demonstrating this exact 1-to-1 compatibility:
\begin{codebox}{Python}
# Simply replace 'from dreal import *' with the abcrown compatibility layer
from abcrown.abcrown_smt import *

# 1. Define variables exactly as in dReal
x = Variable("x")
y = Variable("y")
z = Variable("z")

# 2. Formulate constraints using dReal's syntax and math functions
f_sat = And(
    0 <= x, x <= 10,
    0 <= y, y <= 10,
    0 <= z, z <= 10,
    sin(x) + cos(y) == z
)

# 3. Check satisfiability with a specified delta tolerance
result = CheckSatisfiability(f_sat, 0.001)

print(result)
\end{codebox}

% \subsubsection{General Configuration File}

% \subsection{Replacing SMT solver}
% \textcolor{blue}{Find some new things, reachability maybe? Lipshitz constant.}

\section{Controller and Certificate Synthesis}
In this section, we discuss how to jointly synthesize the controller and the neural certificate through learning (e.g., a learned Lyapunov function or contraction metrics). We note that since formal verification requires the specification to hold for the entire domain, it is much more difficult compared to a normal machine learning task, where certain errors on the test set are tolerable. Therefore, in this setting, a simple random sampling approach to construct the train/test set will not suffice, and we typically design our training algorithms to be verification-aware. One typical training framework is called Counterexample-Guided Inductive Synthesis (CEGIS), which essentially involves finding counterexamples to the verification condition and using them as training data. However, CEGIS exhibits some key disadvantages. For example, training with CEGIS typically requires specially designed initializations, and the final model obtained may not be verification-friendly~\cite{li2025two,shi2024certified}.

% \textcolor{blue}{Cite more. Notations.}
Although there are heuristics that can make CEGIS work better, training a verifier-friendly model purely based on this approach remains a challenge. To conquer this, one prominent direction is certified training, where one directly incorporates the bounds generated by the verifier into the learning process~\cite{shi2024certified, zhang2019towards, shi2021fast, lee2021towards, mao2023understanding}. Compared to dReal-style solvers, a key practical advantage of $\alpha,\!\beta$-CROWN is that its linear relaxation bounds are \textbf{differentiable} with respect to network parameters. At a high level, bound propagation only depends on standard tensor operations (e.g., matrix multiplications), so the resulting output bound is differentiable. This property can be very useful in practice. For example, for a specification of the form $g(x;\theta) >  0$ over an input set $\mathcal{B}$, $\alpha,\!\beta$-CROWN produces a certified lower bound
\begin{align*}
g(x;\theta) \ge \underline{A}(\theta)x+\underline{b}(\theta),\quad \forall x\in\mathcal{B},
\end{align*}
and thus a differentiable lower bound on the worst-case violation:
\begin{align}
    \mathcal{L}(\theta) := \underline{g}(\mathcal{B};\theta) := \inf_{x\in \mathcal{B}} \ \underline{A}(\theta)x + \underline{b}(\theta).
\end{align}
Because $\mathcal{L}(\theta)$ is differentiable, it can be maximized directly using gradient-based optimization to encourage satisfaction of the original specification directly. When this objective is used during training, the procedure is often referred to as \textbf{certified training}. Since this training objective is closely related to the verification objective, models trained with this technique can be much verification-friendlier. In the control setting where the domain $\mathcal{B}$ is huge, just as we have discussed about $\alpha,\!\beta$-CROWN, we can also utilize the idea of branch and bound to tighten the bound in training time. Formally, during this training, we maintain a dataset consisting of domains that partition the full domain $\mathcal{B}$:
\begin{align}
    \mathbb{B}:= \{[\underline{x}^{(1)}, \overline{x}^{(1)}], [\underline{x}^{(2)}, \overline{x}^{(2)}], \cdots, [\underline{x}^{(n)}, \overline{x}^{(n)}]\}.
\end{align}
On each of the subdomains, we use $\alpha,\!\beta$-CROWN to obtain a lower bound $\underline{g}(\underline{x}^{(i)}, \overline{x}^{(i)})$. The loss function can then be defined as
\begin{align}
    L(\theta; \mathbb{B}) = \mathbb{E}_{\mathbb{B}} [\mathrm{ReLU}(-\underline{g}(\underline{x}^{(i)}, \overline{x}^{(i)}))].
\end{align}
Moreover, we can still incorporate the counterexample loss used in the CEGIS framework as an added empirical guard. We shall note that since now the domains are split into smaller pieces, the counterexample finding procedure can also more efficiently find counterexamples and help the learning. When the bound is not tight enough so that the minimization fails to push the lower bound above zero in a particular training subdomain, we further branch it with the heuristics described in section~\ref{sec:bab}. This branching can further tighten the bounds. We then add the split domains into the training data and remove the original domain. 

We note that since this approach directly targets the verification condition and explicitly uses the bound generated by the verifier as a training objective, it can yield models that are much easier to verify. As a comparison, in the 2D quadrotor environment, the certified training approach can yield a model that is 5 times faster to verify compared to a CEGIS-trained model. 

\section{Advanced Topics}

\subsection{Solving Optimization Problems}\label{sec:optimization}
In this section, we briefly discuss how bound propagation, together with branch and bound, can be used as an efficient nonlinear programming solver. In this case, rather than verifying a specific condition like $F(x) \ge 0$ for all $x \in \mathcal{B}$, our goal is to find the minimizer of $F$ over this box domain, i.e., we wish to solve
\begin{align}
    f^* = \min_{x\in \mathcal{B}} F(x).
\end{align}
To efficiently compute this, we shall still continue with the certified lower bound computation with branch-and-bound. We notice that on each of the subdomains $\mathcal{B}_i$, in addition to the certified lower bounds $\underline{F}_{i}$ being produced, a PGD run as in the first stage of solving the satisfiability problem could be used to produce feasible upper bounds $\overline{F}_{i}$ of the problem on each subdomain. Using this upper bound information, it is clear that we can always prune subdomains $\mathcal{B}_j$ with $\underline{F}_{j} > \min_i \overline{F}_{i}$, since this subdomain is certified to be above the best solution we have found so far. Now, with continued branch-and-bound, most subdomains will be pruned, leaving the found feasible solution closer to the optimal possible solution. The algorithm will terminate when a computation budget is reached. We note that in this optimization context, when the gap between the best feasible upper bound and the minimum certified lower bound over the remaining subdomains is small, this branch-and-bound procedure also provides an optimality certificate.

\subsection{Other Improvements of Verification}
When bounds produced by standard propagation remain loose, the verifier supports several advanced functionalities for further bound tightening. Specifically, $\alpha$-CROWN~\cite{xu2020automatic} tightens bounds by optimizing parameterized local linear relaxations. To resolve persistent looseness from non-linearities, $\beta$-CROWN~\cite{wang2021beta} and GenBaB~\cite{shi2024genbab} perform activation branching by partitioning the domains of non-linear operators and optimizing the resulting Lagrangian dual variables, which can achieve complete verification without relying on input domain branch-and-bound. Furthermore, Clip-and-Verify~\cite{zhou2025clip} exploits otherwise discarded affine constraints to continuously prune the input domain and tighten intermediate bounds. Finally, extensions such as GCP-CROWN~\cite{zhang2022general} and BICCOS~\cite{zhou2024scalable} further strengthen the relaxations by incorporating cutting planes directly into the bound propagation framework.

\section{CONCLUSIONS}
In this tutorial, we presented a unified workflow for certifying learned controllers with the neural network verifier $\alpha,\!\beta$-CROWN. The same framework supports multiple control tasks and usage modes. We also discussed the API of $\alpha,\!\beta$-CROWN, as well as training paradigms such as CEGIS and certified training for co-synthesizing controllers and verifier-friendly certificates. We hope this tutorial helps bridge control theory and scalable neural network verification, and makes formal certification a more practical component of learning-based control design.

\section{ACKNOWLEDGMENTS}

 B. Hu is generously supported by the NSF award CAREER-2048168 and  the AFOSR award FA9550-23-1-0732.
 H. Zhang, X. Zhong, and H. Li are supported in part by the AI2050 program at Schmidt Sciences (AI2050 Early Career Fellowship) and NSF (IIS-2331967).

% \huan{Add reference}

% A conclusion section is not required. Although a conclusion may review the main points of the paper, do not replicate the abstract as the conclusion. A conclusion might elaborate on the importance of the work or suggest applications and extensions. 

% \addtolength{\textheight}{-12cm}   % This command serves to balance the column lengths
                                  % on the last page of the document manually. It shortens
                                  % the textheight of the last page by a suitable amount.
                                  % This command does not take effect until the next page
                                  % so it should come on the page before the last. Make
                                  % sure that you do not shorten the textheight too much.

%%%%%%%%%%%%%%%%%%%%%%%%%%%%%%%%%%%%%%%%%%%%%%%%%%%%%%%%%%%%%%%%%%%%%%%%%%%%%%%%

%%%%%%%%%%%%%%%%%%%%%%%%%%%%%%%%%%%%%%%%%%%%%%%%%%%%%%%%%%%%%%%%%%%%%%%%%%%%%%%%

%%%%%%%%%%%%%%%%%%%%%%%%%%%%%%%%%%%%%%%%%%%%%%%%%%%%%%%%%%%%%%%%%%%%%%%%%%%%%%%%
% \section*{APPENDIX}

% Appendixes should appear before the acknowledgment.

% \section*{ACKNOWLEDGMENT}

% The preferred spelling of the word ÒacknowledgmentÓ in America is without an ÒeÓ after the ÒgÓ. Avoid the stilted expression, ÒOne of us (R. B. G.) thanks . . .Ó  Instead, try ÒR. B. G. thanksÓ. Put sponsor acknowledgments in the unnumbered footnote on the first page.

% %%%%%%%%%%%%%%%%%%%%%%%%%%%%%%%%%%%%%%%%%%%%%%%%%%%%%%%%%%%%%%%%%%%%%%%%%%%%%%%%

% References are important to the reader; therefore, each citation must be complete and correct. If at all possible, references should be commonly available publications.
% \newpage
\bibliographystyle{IEEEtran}
\bibliography{reference}

@article{bucsoniu2018reinforcement,
  title={Reinforcement learning for control: Performance, stability, and deep approximators},
  author={Bu{\c{s}}oniu, Lucian and De Bruin, Tim and Toli{\'c}, Domagoj and Kober, Jens and Palunko, Ivana},
  journal={Annual Reviews in Control},
  volume={46},
  pages={8--28},
  year={2018},
  publisher={Elsevier}
}

@article{kaufmann2023champion,
  title={Champion-level drone racing using deep reinforcement learning},
  author={Kaufmann, Elia and Bauersfeld, Leonard and Loquercio, Antonio and M{\"u}ller, Matthias and Koltun, Vladlen and Scaramuzza, Davide},
  journal={Nature},
  volume={620},
  number={7976},
  pages={982--987},
  year={2023},
  publisher={Nature Publishing Group UK London}
}

@article{glavic2019deep,
  title={(Deep) reinforcement learning for electric power system control and related problems: A short review and perspectives},
  author={Glavic, Mevludin},
  journal={Annual Reviews in Control},
  volume={48},
  pages={22--35},
  year={2019},
  publisher={Elsevier}
}

@article{chang2019neural,
  title={Neural lyapunov control},
  author={Chang, Ya-Chien and Roohi, Nima and Gao, Sicun},
  journal={Advances in neural information processing systems},
  volume={32},
  year={2019}
}

@article{dai2021lyapunov,
  title={Lyapunov-stable neural-network control},
  author={Dai, Hongkai and Landry, Benoit and Yang, Lujie and Pavone, Marco and Tedrake, Russ},
  journal={arXiv preprint arXiv:2109.14152},
  year={2021}
}

@article{wu2023neural,
  title={Neural lyapunov control for discrete-time systems},
  author={Wu, Junlin and Clark, Andrew and Kantaros, Yiannis and Vorobeychik, Yevgeniy},
  journal={Advances in neural information processing systems},
  volume={36},
  pages={2939--2955},
  year={2023}
}

@article{yang2024lyapunov,
  title={Lyapunov-stable neural control for state and output feedback: A novel formulation},
  author={Yang, Lujie and Dai, Hongkai and Shi, Zhouxing and Hsieh, Cho-Jui and Tedrake, Russ and Zhang, Huan},
  journal={arXiv preprint arXiv:2404.07956},
  year={2024}
}

@article{wang2024actor,
  title={Actor-critic physics-informed neural lyapunov control},
  author={Wang, Jiarui and Fazlyab, Mahyar},
  journal={IEEE Control Systems Letters},
  year={2024},
  publisher={IEEE}
}

@article{lyapunov1992general,
  title={The general problem of the stability of motion},
  author={Lyapunov, Aleksandr Mikhailovich},
  journal={International journal of control},
  volume={55},
  number={3},
  pages={531--534},
  year={1992},
  publisher={Taylor \& Francis}
}

@article{liu2025physics,
  title={Physics-informed neural network Lyapunov functions: PDE characterization, learning, and verification},
  author={Liu, Jun and Meng, Yiming and Fitzsimmons, Maxwell and Zhou, Ruikun},
  journal={Automatica},
  volume={175},
  pages={112193},
  year={2025},
  publisher={Elsevier}
}

@article{li2025neural,
  title={Neural Contraction Metrics with Formal Guarantees for Discrete-Time Nonlinear Dynamical Systems},
  author={Li, Haoyu and Zhong, Xiangru and Hu, Bin and Zhang, Huan},
  journal={arXiv preprint arXiv:2504.17102},
  year={2025}
}

@article{madry2017towards,
  title={Towards deep learning models resistant to adversarial attacks},
  author={Madry, Aleksander and Makelov, Aleksandar and Schmidt, Ludwig and Tsipras, Dimitris and Vladu, Adrian},
  journal={arXiv preprint arXiv:1706.06083},
  year={2017}
}

@article{tedrake2010lqr,
  title={LQR-trees: Feedback motion planning via sums-of-squares verification},
  author={Tedrake, Russ and Manchester, Ian R and Tobenkin, Mark and Roberts, John W},
  journal={The International Journal of Robotics Research},
  volume={29},
  number={8},
  pages={1038--1052},
  year={2010},
  publisher={SAGE Publications Sage UK: London, England}
}

@inproceedings{dai2023convex,
  title={Convex synthesis and verification of control-Lyapunov and barrier functions with input constraints},
  author={Dai, Hongkai and Permenter, Frank},
  booktitle={2023 American Control Conference (ACC)},
  pages={4116--4123},
  year={2023},
  organization={IEEE}
}

@inproceedings{robey2020learning,
  title={Learning control barrier functions from expert demonstrations},
  author={Robey, Alexander and Hu, Haimin and Lindemann, Lars and Zhang, Hanwen and Dimarogonas, Dimos V and Tu, Stephen and Matni, Nikolai},
  booktitle={2020 59th IEEE Conference on Decision and Control (CDC)},
  pages={3717--3724},
  year={2020},
  organization={Ieee}
}

@inproceedings{dawson2022safe,
  title={Safe nonlinear control using robust neural lyapunov-barrier functions},
  author={Dawson, Charles and Qin, Zengyi and Gao, Sicun and Fan, Chuchu},
  booktitle={Conference on Robot Learning},
  pages={1724--1735},
  year={2022},
  organization={PMLR}
}

@inproceedings{gao2013dreal,
  title={dReal: An SMT solver for nonlinear theories over the reals},
  author={Gao, Sicun and Kong, Soonho and Clarke, Edmund M},
  booktitle={International conference on automated deduction},
  pages={208--214},
  year={2013},
  organization={Springer}
}

@article{zhang2018efficient,
  title={Efficient Neural Network Robustness Certification with General Activation Functions},
  author={Zhang, Huan and Weng, Tsui-Wei and Chen, Pin-Yu and Hsieh, Cho-Jui and Daniel, Luca},
  journal={Advances in Neural Information Processing Systems},
  volume={31},
  pages={4939--4948},
  year={2018},
  url={https://arxiv.org/pdf/1811.00866.pdf}
}

@article{xu2020automatic,
  title={Automatic perturbation analysis for scalable certified robustness and beyond},
  author={Xu, Kaidi and Shi, Zhouxing and Zhang, Huan and Wang, Yihan and Chang, Kai-Wei and Huang, Minlie and Kailkhura, Bhavya and Lin, Xue and Hsieh, Cho-Jui},
  journal={Advances in Neural Information Processing Systems},
  volume={33},
  year={2020}
}

@article{salman2019convex,
  title={A Convex Relaxation Barrier to Tight Robustness Verification of Neural Networks},
  author={Salman, Hadi and Yang, Greg and Zhang, Huan and Hsieh, Cho-Jui and Zhang, Pengchuan},
  journal={Advances in Neural Information Processing Systems},
  volume={32},
  pages={9835--9846},
  year={2019}
}

@inproceedings{xu2021fast,
    title={{Fast and Complete}: Enabling Complete Neural Network Verification with Rapid and Massively Parallel Incomplete Verifiers},
    author={Kaidi Xu and Huan Zhang and Shiqi Wang and Yihan Wang and Suman Jana and Xue Lin and Cho-Jui Hsieh},
    booktitle={International Conference on Learning Representations},
    year={2021},
    url={https://openreview.net/forum?id=nVZtXBI6LNn}
}

@article{wang2021beta,
  title={{Beta-CROWN}: Efficient bound propagation with per-neuron split constraints for complete and incomplete neural network verification},
  author={Wang, Shiqi and Zhang, Huan and Xu, Kaidi and Lin, Xue and Jana, Suman and Hsieh, Cho-Jui and Kolter, J Zico},
  journal={Advances in Neural Information Processing Systems},
  volume={34},
  year={2021}
}

@article{zhang2022general,
  title={General Cutting Planes for Bound-Propagation-Based Neural Network Verification},
  author={Zhang, Huan and Wang, Shiqi and Xu, Kaidi and Li, Linyi and Li, Bo and Jana, Suman and Hsieh, Cho-Jui and Kolter, J Zico},
  journal={Advances in Neural Information Processing Systems},
  year={2022}
}

@inproceedings{kotha2023provably,
 author = {Kotha, Suhas and Brix, Christopher and Kolter, J. Zico and Dvijotham, Krishnamurthy and Zhang, Huan},
 booktitle = {Advances in Neural Information Processing Systems},
 editor = {A. Oh and T. Neumann and A. Globerson and K. Saenko and M. Hardt and S. Levine},
 pages = {80270--80290},
 publisher = {Curran Associates, Inc.},
 title = {Provably Bounding Neural Network Preimages},
 volume = {36},
 year = {2023}
}

@inproceedings{zhou2024scalable,
  title={Scalable Neural Network Verification with Branch-and-bound Inferred Cutting Planes},
  author={Zhou, Duo and Brix, Christopher and Hanasusanto, Grani A and Zhang, Huan},
  booktitle={The Thirty-eighth Annual Conference on Neural Information Processing Systems},
  year={2024}
}

@inproceedings{shi2024genbab,
  title={Neural Network Verification with Branch-and-Bound for General Nonlinearities},
  author={Shi, Zhouxing and Jin, Qirui and Kolter, Zico and Jana, Suman and Hsieh, Cho-Jui and Zhang, Huan},
  booktitle={International Conference on Tools and Algorithms for the Construction and Analysis of Systems},
  year={2025}
}

@article{ames2016control,
  title={Control barrier function based quadratic programs for safety critical systems},
  author={Ames, Aaron D and Xu, Xiangru and Grizzle, Jessy W and Tabuada, Paulo},
  journal={IEEE Transactions on Automatic Control},
  volume={62},
  number={8},
  pages={3861--3876},
  year={2016},
  publisher={IEEE}
}

@article{dawson2023safe,
  title={Safe control with learned certificates: A survey of neural lyapunov, barrier, and contraction methods for robotics and control},
  author={Dawson, Charles and Gao, Sicun and Fan, Chuchu},
  journal={IEEE Transactions on Robotics},
  volume={39},
  number={3},
  pages={1749--1767},
  year={2023},
  publisher={IEEE}
}

@article{tsukamoto2020neural,
  title={Neural contraction metrics for robust estimation and control: A convex optimization approach},
  author={Tsukamoto, Hiroyasu and Chung, Soon-Jo},
  journal={IEEE Control Systems Letters},
  volume={5},
  number={1},
  pages={211--216},
  year={2020},
  publisher={IEEE}
}

@inproceedings{sun2021learning,
  title={Learning certified control using contraction metric},
  author={Sun, Dawei and Jha, Susmit and Fan, Chuchu},
  booktitle={Conference on Robot Learning},
  pages={1519--1539},
  year={2021},
  organization={PMLR}
}

@book{khalil2002nonlinear,
  title={Nonlinear systems},
  author={Khalil, Hassan K and Grizzle, Jessy W},
  volume={3},
  year={2002},
  publisher={Prentice hall Upper Saddle River, NJ}
}

@article{abate2020formal,
  title={Formal synthesis of Lyapunov neural networks},
  author={Abate, Alessandro and Ahmed, Daniele and Giacobbe, Mirco and Peruffo, Andrea},
  journal={IEEE Control Systems Letters},
  volume={5},
  number={3},
  pages={773--778},
  year={2020},
  publisher={IEEE}
}

@article{yin2021stability,
  title={Stability analysis using quadratic constraints for systems with neural network controllers},
  author={Yin, He and Seiler, Peter and Arcak, Murat},
  journal={IEEE Transactions on Automatic Control},
  volume={67},
  number={4},
  pages={1980--1987},
  year={2021},
  publisher={IEEE}
}

@inproceedings{abate2018counterexample,
  title={Counterexample guided inductive synthesis modulo theories},
  author={Abate, Alessandro and David, Cristina and Kesseli, Pascal and Kroening, Daniel and Polgreen, Elizabeth},
  booktitle={International Conference on Computer Aided Verification},
  pages={270--288},
  year={2018},
  organization={Springer}
}

@inproceedings{meng2024zubov,
  title={Zubov-Koopman Learning of Maximal Lyapunov Functions},
  author={Meng, Yiming and Zhou, Ruikun and Liu, Jun},
  booktitle={2024 American Control Conference (ACC)},
  pages={4020--4025},
  year={2024},
  organization={IEEE}
}

@article{zhao2024applications,
  title={Applications of machine learning in real-time control systems: a review},
  author={Zhao, Xiaoning and Sun, Yougang and Li, Yanmin and Jia, Ning and Xu, Junqi},
  journal={Measurement Science and Technology},
  year={2024},
  publisher={IOP Publishing}
}

@article{tsukamoto2020stochastic,
  title={Neural stochastic contraction metrics for learning-based control and estimation},
  author={Tsukamoto, Hiroyasu and Chung, Soon-Jo and Slotine, Jean-Jacques E},
  journal={IEEE Control Systems Letters},
  volume={5},
  number={5},
  pages={1825--1830},
  year={2020},
  publisher={IEEE}
}

@inproceedings{tsukamoto2021theoretical,
  title={A theoretical overview of neural contraction metrics for learning-based control with guaranteed stability},
  author={Tsukamoto, Hiroyasu and Chung, Soon-Jo and Slotine, Jean-Jacques and Fan, Chuchu},
  booktitle={2021 60th IEEE Conference on Decision and Control (CDC)},
  pages={2949--2954},
  year={2021},
  organization={IEEE}
}

@article{serry2025safe,
  title={Safe Domains of Attraction for Discrete-Time Nonlinear Systems: Characterization and Verifiable Neural Network Estimation},
  author={Serry, Mohamed and Li, Haoyu and Zhou, Ruikun and Zhang, Huan and Liu, Jun},
  journal={arXiv preprint arXiv:2506.13961},
  year={2025}
}

@article{meng2025learning,
  title={Learning regions of attraction in unknown dynamical systems via Zubov-Koopman lifting: Regularities and convergence},
  author={Meng, Yiming and Zhou, Ruikun and Liu, Jun},
  journal={IEEE Transactions on Automatic Control},
  year={2025},
  publisher={IEEE}
}

@inproceedings{liu2024tool,
  title={TOOL LyZNet: A lightweight python tool for learning and verifying neural lyapunov functions and regions of attraction},
  author={Liu, Jun and Meng, Yiming and Fitzsimmons, Maxwell and Zhou, Ruikun},
  booktitle={Proceedings of the 27th ACM International Conference on Hybrid Systems: Computation and Control},
  pages={1--8},
  year={2024}
}

@inproceedings{edwards2024fossil,
  title={Fossil 2.0: Formal certificate synthesis for the verification and control of dynamical models},
  author={Edwards, Alec and Peruffo, Andrea and Abate, Alessandro},
  booktitle={Proceedings of the 27th ACM International Conference on Hybrid Systems: Computation and Control},
  pages={1--10},
  year={2024}
}

@article{jagtap2020formal,
  title={Formal synthesis of stochastic systems via control barrier certificates},
  author={Jagtap, Pushpak and Soudjani, Sadegh and Zamani, Majid},
  journal={IEEE Transactions on Automatic Control},
  volume={66},
  number={7},
  pages={3097--3110},
  year={2020},
  publisher={IEEE}
}

@inproceedings{de2008z3,
  title={Z3: An efficient SMT solver},
  author={De Moura, Leonardo and Bj{\o}rner, Nikolaj},
  booktitle={International conference on Tools and Algorithms for the Construction and Analysis of Systems},
  pages={337--340},
  year={2008},
  organization={Springer}
}

@article{li2025two,
  title={Two-Stage Learning of Stabilizing Neural Controllers via Zubov Sampling and Iterative Domain Expansion},
  author={Li, Haoyu and Zhong, Xiangru and Hu, Bin and Zhang, Huan},
  journal={arXiv preprint arXiv:2506.01356},
  year={2025}
}

@article{gowal2018effectiveness,
  title={On the effectiveness of interval bound propagation for training verifiably robust models},
  author={Gowal, Sven and Dvijotham, Krishnamurthy and Stanforth, Robert and Bunel, Rudy and Qin, Chongli and Uesato, Jonathan and Arandjelovic, Relja and Mann, Timothy and Kohli, Pushmeet},
  journal={arXiv preprint arXiv:1810.12715},
  year={2018}
}

@article{zhou2025clip,
  title={Clip-and-verify: Linear constraint-driven domain clipping for accelerating neural network verification},
  author={Zhou, Duo and Chavez, Jorge and Chen, Hesun and Hanasusanto, Grani A and Zhang, Huan},
  journal={arXiv preprint arXiv:2512.11087},
  year={2025}
}

@article{liu2025formal,
  title={Formal Verification of Control Lyapunov-Barrier Functions for Safe Stabilization with Bounded Controls},
  author={Liu, Jun},
  journal={arXiv preprint arXiv:2511.10510},
  year={2025}
}

@article{meng2025towards,
  title={Towards Learning and Verifying Maximal Lyapunov-Barrier Functions with a Zubov PDE Formulation},
  author={Meng, Yiming and Liu, Jun},
  journal={arXiv preprint arXiv:2511.09523},
  year={2025}
}

@inproceedings{liu2025formally,
  title={Formally verified physics-informed neural control lyapunov functions},
  author={Liu, Jun and Fitzsimmons, Maxwell and Zhou, Ruikun and Meng, Yiming},
  booktitle={2025 American Control Conference (ACC)},
  pages={1347--1354},
  year={2025},
  organization={IEEE}
}

@inproceedings{mirman2018differentiable,
  title={Differentiable abstract interpretation for provably robust neural networks},
  author={Mirman, Matthew and Gehr, Timon and Vechev, Martin},
  booktitle={International Conference on Machine Learning},
  pages={3578--3586},
  year={2018},
  organization={PMLR}
}

@article{bunel2018unified,
  title={A unified view of piecewise linear neural network verification},
  author={Bunel, Rudy R and Turkaslan, Ilker and Torr, Philip and Kohli, Pushmeet and Mudigonda, Pawan K},
  journal={Advances in neural information processing systems},
  volume={31},
  year={2018}
}

@article{bof2018lyapunov,
  title={Lyapunov theory for discrete time systems},
  author={Bof, Nicoletta and Carli, Ruggero and Schenato, Luca},
  journal={arXiv preprint arXiv:1809.05289},
  year={2018}
}

@article{tjeng2017evaluating,
  title={Evaluating robustness of neural networks with mixed integer programming},
  author={Tjeng, Vincent and Xiao, Kai and Tedrake, Russ},
  journal={arXiv preprint arXiv:1711.07356},
  year={2017}
}

@inproceedings{katz2017reluplex,
  title={Reluplex: An efficient SMT solver for verifying deep neural networks},
  author={Katz, Guy and Barrett, Clark and Dill, David L and Julian, Kyle and Kochenderfer, Mykel J},
  booktitle={International conference on computer aided verification},
  pages={97--117},
  year={2017},
  organization={Springer}
}

@inproceedings{katz2019marabou,
  title={The marabou framework for verification and analysis of deep neural networks},
  author={Katz, Guy and Huang, Derek A and Ibeling, Duligur and Julian, Kyle and Lazarus, Christopher and Lim, Rachel and Shah, Parth and Thakoor, Shantanu and Wu, Haoze and Zelji{\'c}, Aleksandar and others},
  booktitle={International conference on computer aided verification},
  pages={443--452},
  year={2019},
  organization={Springer}
}

@article{shi2024certified,
  title={Certified Training with Branch-and-Bound for Lyapunov-stable Neural Control},
  author={Shi, Zhouxing and Li, Haoyu and Hsieh, Cho-Jui and Zhang, Huan},
  journal={arXiv preprint arXiv:2411.18235},
  year={2024}
}

@inproceedings{wong2018provable,
  title={Provable defenses against adversarial examples via the convex outer adversarial polytope},
  author={Wong, Eric and Kolter, Zico},
  booktitle={International conference on machine learning},
  pages={5286--5295},
  year={2018},
  organization={PMLR}
}

@article{raghunathan2018certified,
  title={Certified defenses against adversarial examples},
  author={Raghunathan, Aditi and Steinhardt, Jacob and Liang, Percy},
  journal={arXiv preprint arXiv:1801.09344},
  year={2018}
}

@inproceedings{brown2022unified,
  title={A unified view of SDP-based neural network verification through completely positive programming},
  author={Brown, Robin A and Schmerling, Edward and Azizan, Navid and Pavone, Marco},
  booktitle={International conference on artificial intelligence and statistics},
  pages={9334--9355},
  year={2022},
  organization={PMLR}
}

@inproceedings{gehr2018ai2,
  title={Ai2: Safety and robustness certification of neural networks with abstract interpretation},
  author={Gehr, Timon and Mirman, Matthew and Drachsler-Cohen, Dana and Tsankov, Petar and Chaudhuri, Swarat and Vechev, Martin},
  booktitle={2018 IEEE symposium on security and privacy (SP)},
  pages={3--18},
  year={2018},
  organization={IEEE}
}

@article{singh2018fast,
  title={Fast and effective robustness certification},
  author={Singh, Gagandeep and Gehr, Timon and Mirman, Matthew and P{\"u}schel, Markus and Vechev, Martin},
  journal={Advances in neural information processing systems},
  volume={31},
  year={2018}
}

@inproceedings{chen2013flow,
  title={Flow*: An analyzer for non-linear hybrid systems},
  author={Chen, Xin and {\'A}brah{\'a}m, Erika and Sankaranarayanan, Sriram},
  booktitle={International Conference on Computer Aided Verification},
  pages={258--263},
  year={2013},
  organization={Springer}
}

@article{althoff2021set,
  title={Set propagation techniques for reachability analysis},
  author={Althoff, Matthias and Frehse, Goran and Girard, Antoine},
  journal={Annual Review of Control, Robotics, and Autonomous Systems},
  volume={4},
  number={1},
  pages={369--395},
  year={2021},
  publisher={Annual Reviews}
}

@inproceedings{tran2020nnv,
  title={NNV: the neural network verification tool for deep neural networks and learning-enabled cyber-physical systems},
  author={Tran, Hoang-Dung and Yang, Xiaodong and Manzanas Lopez, Diego and Musau, Patrick and Nguyen, Luan Viet and Xiang, Weiming and Bak, Stanley and Johnson, Taylor T},
  booktitle={International conference on computer aided verification},
  pages={3--17},
  year={2020},
  organization={Springer}
}

@article{shen2024bab,
  title={Bab-nd: Long-horizon motion planning with branch-and-bound and neural dynamics},
  author={Shen, Keyi and Yu, Jiangwei and Barreiros, Jose and Zhang, Huan and Li, Yunzhu},
  journal={arXiv preprint arXiv:2412.09584},
  year={2024}
}

@article{tran2018convergence,
  title={Convergence properties for discrete-time nonlinear systems},
  author={Tran, Duc N and R{\"u}ffer, Bj{\"o}rn S and Kellett, Christopher M},
  journal={IEEE Transactions on Automatic Control},
  volume={64},
  number={8},
  pages={3415--3422},
  year={2018},
  publisher={IEEE}
}

@inproceedings{shen2026diffreach,
  title={Parallel Differentiable Reachability for Learning and Planning with Certified Neural Dynamics and Controllers},
  author={Keyi Shen and Glen Chou},
  booktitle={Proceedings of Robotics: Science and Systems (RSS)},
  year={2026}
}

@article{zhang2019towards,
  title={Towards stable and efficient training of verifiably robust neural networks},
  author={Zhang, Huan and Chen, Hongge and Xiao, Chaowei and Gowal, Sven and Stanforth, Robert and Li, Bo and Boning, Duane and Hsieh, Cho-Jui},
  journal={arXiv preprint arXiv:1906.06316},
  year={2019}
}

@article{shi2021fast,
  title={Fast certified robust training with short warmup},
  author={Shi, Zhouxing and Wang, Yihan and Zhang, Huan and Yi, Jinfeng and Hsieh, Cho-Jui},
  journal={Advances in Neural Information Processing Systems},
  volume={34},
  pages={18335--18349},
  year={2021}
}

@article{lee2021towards,
  title={Towards better understanding of training certifiably robust models against adversarial examples},
  author={Lee, Sungyoon and Lee, Woojin and Park, Jinseong and Lee, Jaewook},
  journal={Advances in Neural Information Processing Systems},
  volume={34},
  pages={953--964},
  year={2021}
}

@article{mao2023understanding,
  title={Understanding certified training with interval bound propagation},
  author={Mao, Yuhao and M{\"u}ller, Mark Niklas and Fischer, Marc and Vechev, Martin},
  journal={arXiv preprint arXiv:2306.10426},
  year={2023}
}

@article{tran2020verification,
  title={Verification approaches for learning-enabled autonomous cyber--physical systems},
  author={Tran, Hoang-Dung and Xiang, Weiming and Johnson, Taylor T},
  journal={IEEE Design \& Test},
  volume={39},
  number={1},
  pages={24--34},
  year={2020},
  publisher={IEEE}
}

@inproceedings{liu2023towards,
  title={Towards learning and verifying maximal neural Lyapunov functions},
  author={Liu, Jun and Meng, Yiming and Fitzsimmons, Maxwell and Zhou, Ruikun},
  booktitle={2023 62nd IEEE Conference on Decision and Control (CDC)},
  pages={8012--8019},
  year={2023},
  organization={IEEE}
}

@inproceedings{dai2024verification,
  title={Verification and synthesis of compatible control lyapunov and control barrier functions},
  author={Dai, Hongkai and Jiang, Chuanrui and Zhang, Hongchao and Clark, Andrew},
  booktitle={2024 IEEE 63rd Conference on Decision and Control (CDC)},
  pages={8178--8185},
  year={2024},
  organization={IEEE}
}

@article{yin2021imitation,
  title={Imitation learning with stability and safety guarantees},
  author={Yin, He and Seiler, Peter and Jin, Ming and Arcak, Murat},
  journal={IEEE Control Systems Letters},
  volume={6},
  pages={409--414},
  year={2021},
  publisher={IEEE}
}

@inproceedings{papachristodoulou2005tutorial,
  title={A tutorial on sum of squares techniques for systems analysis},
  author={Papachristodoulou, Antonis and Prajna, Stephen},
  booktitle={Proceedings of the 2005, American Control Conference, 2005.},
  pages={2686--2700},
  year={2005},
  organization={IEEE}
}

@inproceedings{newton2022stability,
  title={Stability of non-linear neural feedback loops using sum of squares},
  author={Newton, Matthew and Papachristodoulou, Antonis},
  booktitle={2022 IEEE 61st Conference on Decision and Control (CDC)},
  pages={6000--6005},
  year={2022},
  organization={IEEE}
}

@article{detailleur2025improved,
  title={Improved Sum-of-Squares Stability Verification of Neural-Network-Based Controllers},
  author={Detailleur, Alvaro and Ducard, Guillaume and Onder, Christopher},
  journal={arXiv preprint arXiv:2507.10352},
  year={2025}
}

@article{clark2024semialgebraic,
  title={A semialgebraic framework for verification and synthesis of control barrier functions},
  author={Clark, Andrew},
  journal={IEEE Transactions on Automatic Control},
  volume={70},
  number={5},
  pages={3101--3116},
  year={2024},
  publisher={IEEE}
}

@article{wei2021control,
  title={Control contraction metric synthesis for discrete-time nonlinear systems},
  author={Wei, Lai and Mccloy, Ryan and Bao, Jie},
  journal={IFAC-PapersOnLine},
  volume={54},
  number={3},
  pages={661--666},
  year={2021},
  publisher={Elsevier}
}

@inproceedings{karg2020stability,
  title={Stability and feasibility of neural network-based controllers via output range analysis},
  author={Karg, Benjamin and Lucia, Sergio},
  booktitle={2020 59th IEEE Conference on Decision and Control (CDC)},
  pages={4947--4954},
  year={2020},
  organization={IEEE}
}

@InProceedings{pmlr-v270-hu25a,
  title = 	 {Verification of Neural Control Barrier Functions with Symbolic Derivative Bounds Propagation},
  author =       {Hu, Hanjiang and Yang, Yujie and Wei, Tianhao and Liu, Changliu},
  booktitle = 	 {Proceedings of The 8th Conference on Robot Learning},
  pages = 	 {1797--1814},
  year = 	 {2025},
  volume = 	 {270},
  series = 	 {Proceedings of Machine Learning Research},
  month = 	 {06--09 Nov},
  publisher =    {PMLR},
}

@article{zhang2023exact,
  title={Exact verification of relu neural control barrier functions},
  author={Zhang, Hongchao and Wu, Junlin and Vorobeychik, Yevgeniy and Clark, Andrew},
  journal={Advances in neural information processing systems},
  volume={36},
  pages={5685--5705},
  year={2023}
}

@article{vertovec2025scalable,
  title={Scalable Verification of Neural Control Barrier Functions Using Linear Bound Propagation},
  author={Vertovec, Nikolaus and Mathiesen, Frederik Baymler and Badings, Thom and Laurenti, Luca and Abate, Alessandro},
  journal={arXiv preprint arXiv:2511.06341},
  year={2025}
}

@inproceedings{hu2024real,
  index = {C75},
  title = {Real-Time Safe Control of Neural Network Dynamic Models with Sound Approximation},
  author = {Hu, Hanjiang and Lan, Jianglin and Liu, Changliu},
  booktitle = {Learning for Dynamics and Control Conference},
  year = {2024},
  repository = {https://github.com/intelligent-control-lab/BOND},
  url = {https://proceedings.mlr.press/v242/hu24a.html}
}

@article{li2025verifiable,
  index = {J37},
  title = {Verifiable Safety Q-Filters via Hamilton-Jacobi Reachability and Multiplicative Q-Networks},
  author = {Li, Jiaxing and Hu, Hanjiang and Yang, Yujie and Liu, Changliu},
  journal = {IEEE Control Systems Letters},
  year = {2025},
  url = {https://ieeexplore.ieee.org/stamp/stamp.jsp?tp=&arnumber=11157757}
}

@inproceedings{zhao2023safety,
  index = {C56},
  title = {Safety index synthesis via sum-of-squares programming},
  author = {Zhao, Weiye and He, Tairan and Wei, Tianhao and Liu, Simin and Liu, Changliu},
  booktitle = {American Control Conference},
  pages = {732--737},
  year = {2023},
  organization = {IEEE}
}

@article{jafarpour2024efficient,
  title={Efficient interaction-aware interval analysis of neural network feedback loops},
  author={Jafarpour, Saber and Harapanahalli, Akash and Coogan, Samuel},
  journal={IEEE Transactions on Automatic Control},
  volume={69},
  number={12},
  pages={8706--8721},
  year={2024},
  publisher={IEEE}
}

@inproceedings{jafarpour2023interval,
  title={Interval reachability of nonlinear dynamical systems with neural network controllers},
  author={Jafarpour, Saber and Harapanahalli, Akash and Coogan, Samuel},
  booktitle={Learning for Dynamics and Control Conference},
  pages={12--25},
  year={2023},
  organization={PMLR}
}

@article{zhang2024reachability,
  title={Reachability analysis of neural network control systems with tunable accuracy and efficiency},
  author={Zhang, Yuhao and Zhang, Hang and Xu, Xiangru},
  journal={IEEE Control Systems Letters},
  volume={8},
  pages={1697--1702},
  year={2024},
  publisher={IEEE}
}

@inproceedings{zhang2023reachability,
  title={Reachability analysis of neural network control systems},
  author={Zhang, Chi and Ruan, Wenjie and Xu, Peipei},
  booktitle={Proceedings of the AAAI Conference on Artificial Intelligence},
  volume={37},
  number={12},
  pages={15287--15295},
  year={2023}
}

@article{abate2024safe,
  title={Safe reach set computation via neural barrier certificates},
  author={Abate, Alessandro and Bogomolov, Sergiy and Edwards, Alec and Potomkin, Kostiantyn and Soudjani, Sadegh and Zuliani, Paolo},
  journal={IFAC-PapersOnLine},
  volume={58},
  number={11},
  pages={107--114},
  year={2024},
  publisher={Elsevier}
}

@article{ravanbakhsh2015counterexample,
  title={Counterexample guided synthesis of switched controllers for reach-while-stay properties},
  author={Ravanbakhsh, Hadi and Sankaranarayanan, Sriram},
  journal={arXiv preprint arXiv:1505.01180},
  year={2015}
}

@article{masti2023counter,
  title={Counter-example guided inductive synthesis of control Lyapunov functions for uncertain systems},
  author={Masti, Daniele and Fabiani, Filippo and Gnecco, Giorgio and Bemporad, Alberto},
  journal={IEEE Control Systems Letters},
  volume={7},
  pages={2047--2052},
  year={2023},
  publisher={IEEE}
}

@inproceedings{dai2020counter,
  title={Counter-example guided synthesis of neural network Lyapunov functions for piecewise linear systems},
  author={Dai, Hongkai and Landry, Benoit and Pavone, Marco and Tedrake, Russ},
  booktitle={2020 59th IEEE Conference on Decision and Control (CDC)},
  pages={1274--1281},
  year={2020},
  organization={IEEE}
}

@inproceedings{ravanbakhsh2016robust,
  title={Robust controller synthesis of switched systems using counterexample guided framework},
  author={Ravanbakhsh, Hadi and Sankaranarayanan, Sriram},
  booktitle={Proceedings of the 13th International Conference on Embedded Software},
  pages={1--10},
  year={2016}
}

@article{coogan2017finite,
  title={Finite abstraction of mixed monotone systems with discrete and continuous inputs},
  author={Coogan, Samuel and Arcak, Murat},
  journal={Nonlinear Analysis: Hybrid Systems},
  volume={23},
  pages={254--271},
  year={2017},
  publisher={Elsevier}
}

@inproceedings{coogan2015efficient,
  title={Efficient finite abstraction of mixed monotone systems},
  author={Coogan, Samuel and Arcak, Murat},
  booktitle={Proceedings of the 18th International Conference on Hybrid Systems: Computation and Control},
  pages={58--67},
  year={2015}
}

@inproceedings{coogan2020mixed,
  title={Mixed monotonicity for reachability and safety in dynamical systems},
  author={Coogan, Samuel},
  booktitle={2020 59th IEEE Conference on Decision and Control (CDC)},
  pages={5074--5085},
  year={2020},
  organization={IEEE}
}

@article{dutreix2020specification,
  title={Specification-guided verification and abstraction refinement of mixed monotone stochastic systems},
  author={Dutreix, Maxence and Coogan, Samuel},
  journal={IEEE Transactions on Automatic Control},
  volume={66},
  number={7},
  pages={2975--2990},
  year={2020},
  publisher={IEEE}
}

@article{sidrane2022overt,
  title={Overt: An algorithm for safety verification of neural network control policies for nonlinear systems},
  author={Sidrane, Chelsea and Maleki, Amir and Irfan, Ahmed and Kochenderfer, Mykel J},
  journal={Journal of Machine Learning Research},
  volume={23},
  number={117},
  pages={1--45},
  year={2022}
}

\end{document}